%
%
%
%

\documentclass[10pt,journal]{IEEEtran} 

\usepackage[display]{algorithm}
\usepackage[noend]{algorithmic}
\usepackage{enumitem}
\usepackage{mdwlist}
\usepackage{framed}
\usepackage{xspace}
\usepackage[dvips]{graphicx}
\usepackage{subfigure}
\usepackage{amsthm}
\usepackage[cmex10]{amsmath}
\usepackage{amssymb, amsfonts} 
\usepackage{array}
\usepackage{cases}
\usepackage{stfloats}
\usepackage{float}
\usepackage{widetext}
\usepackage{verbatim}

\newtheorem{definition}{Definition}

\makeatletter
\renewcommand{\section}{\@startsection {section}{1}{\z@}%
  {-1.5ex \@plus -0.5ex \@minus -.1ex}%
  {0.5ex \@plus.1ex}%
  {\normalfont\Large\bfseries}}
\renewcommand{\subsection}{\@startsection{subsection}{2}{\z@}%
  {-1.25ex\@plus -0.5ex \@minus -.1ex}%
  {0.5ex \@plus .1ex}%
  {\normalfont\large\bfseries}}
\renewcommand{\subsubsection}{\@startsection{subsubsection}{3}{\z@}%
  {-1.25ex\@plus -0.5ex \@minus -.1ex}%
  {0.5ex \@plus .1ex}%
  {\normalfont\normalsize\bfseries}}
\renewcommand{\paragraph}{\@startsection{paragraph}{4}{\z@}%
  {1.25ex \@plus.5ex \@minus.1ex}%
  {-0.5em}%
  {\normalfont\normalsize\bfseries}}
\makeatother


\floatname{algorithm}{Procedure Set}

\newtheorem{thm}{Theorem}

\def\etal{et\ al.\xspace}

\begin{document}

\author{Sumeet Bajaj, Anrin Chakraborti, Radu Sion
\thanks {Sumeet Bajaj, Anrin Chakraborti and Radu Sion are with the Department
of Computer Science, StonyBrook University, Stony Brook,
NY, 11790.}
}
        
\title{Practical Foundations of History Independence}
\maketitle


\begin{abstract}
The way data structures organize data is often a function of the sequence of past operations. The organization of data is referred to as the data structure's state, and the sequence of past operations constitutes the data structure's history. A data structure state can therefore be used as an oracle to derive information about its history. As a result, for history-sensitive applications, such as privacy in e-voting, incremental signature schemes, and regulatory compliant data retention; it is imperative to conceal historical information contained within data structure states.

Data structure history can be hidden by making data structures history independent. In this paper, we explore how to achieve history independence.

We observe that current history independence notions are significantly limited in number and scope. There are two existing notions of history independence -- weak history independence (WHI) and strong history independence (SHI). WHI does not protect against insider adversaries and SHI mandates canonical representations, resulting in inefficiency.

We postulate the need for a broad, encompassing notion of history independence, which can capture WHI, SHI, and a broad spectrum of new history independence notions. To this end, we introduce $\Delta$history independence ($\Delta$HI), a generic game-based framework that is malleable enough to accommodate existing and new history independence notions. 

As an essential step towards formalizing $\Delta$HI, we explore the concepts of abstract data types, data structures, machine models, memory representations and history independence. Finally, to bridge the gap between theory and practice, we outline a general recipe for building end-to-end, history independent systems and demonstrate the use of the recipe in designing two history independent file systems.

\end{abstract}

\begin{IEEEkeywords}
History independence, data structures, regulatory compliance
\end{IEEEkeywords}






\section{Introduction}
\label{hitheory:intro:hi}


Data structures are commonly used constructs to store and retrieve
data in systems.
However, data structures carry more information than the raw data they
organize. One aspect of this information is the history 
leading to the data structure's current state \cite{golovinthesis}.


Concealing historical information contained within data structure
states is necessary for incremental signature schemes
\cite{naorcuckoo} and for privacy in voting systems
\cite{naorcuckoo,blellochhashing,molnarprom,moranwriteonce}.
Therefore, the need arises for data structures that reveal no
information about the history that led to their current
state other than what is inherently visible from the data.
History independence \cite{hartlinecharacterizinghi}
has been devised to enable the design of such data structures and they are
termed as \emph{``history independent data structures''}.


We have identified the role of history independence in designing systems
that are compliant with data retention regulations \cite{cfr240}.
Retention regulations desire that once data is deleted, no evidence about
the past existence of deleted data must be recoverable.
Such a deletion cannot be achieved
by simply overwriting data as in secure deletion \cite{diesburgsurvey}.
This is because overwriting does not eliminate the effects that
previous existence of delete data leaves on the current system
state.
Even after secure deletion, the current state can be used as an
oracle to derive information about the past existence of deleted
records.
For example, the current organization of data blocks on disk is a function of the sequence 
of previous writes to file system or to database search indexes.
The organization could be different depending on whether a particular record
was deleted in the past or was never inserted in the data set.
Therefore, questions about history, such as ``was John's record ever in the HIV patients'
dataset'' can be answered much more accurately than guessing by simply
looking at the search index organization on disk since the
organization could be different depending on whether John has previously been
in the data set or not.
The inference of past existence of deleted data violates data retention regulations. \\
However, in order to architect systems with history independent characteristics and
to prove history independence, we need a formal notion of data structures,
of data structure states, and of history independence itself.
In this paper we first formalize all necessary concepts
and understand history independence from a theoretical perspective
(Sections \ref{hitheory:preliminaries} - \ref{hitheory:hi:generalizinghi}).
Then, in Section \ref{hi:hifs} and \ref{hifs:dhi}, we use the theoretical results to design two history independent
file systems.




\section{A Quick Informal Look at HI}
\label{hitheory:informal}

History independence (HI) is concerned with the historical information
preserved within data structure states.
The preserved history may be illicitly used by adversaries to
violate regulatory compliance.
For example, an adversary may breach data retention laws by
recovering deleted data. 
Therefore, to understand history independence, we need to specify what we mean by
state, what we mean by history, and what an adversary can do.

\begin{figure}[t]
\begin{center}
\vspace{0.5cm}
\includegraphics[width=3.5in]{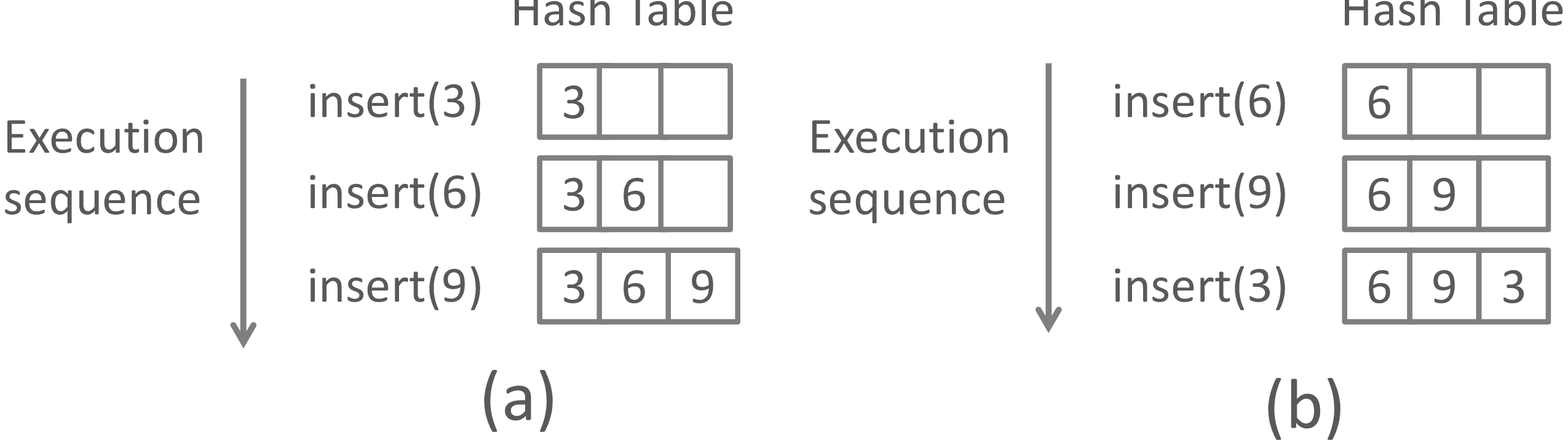}
\vspace{-3.8cm}
\caption[Example of a history dependent hash table]
{A history dependent hash table organizes the same data set
differently depending on the sequence of operations (i.e., history).
In this example, the hash table uses linear probing \cite{Mehta:2004:HDS:1044879}.
The number of hash table buckets is 3 and the hash function is modulo 3.
\label{fig:hitheory:hashtableexample}}
\vspace{-0.4cm}
\end{center}
\end{figure}
%


\smallskip
\noindent
\textbf{What is state?}\\
A data structure's state is an organization of data
on a physical medium such as memory or disk.

\smallskip
\noindent
\textbf{What is history?}\\
History is the sequence of operations that led to the current data structure state.

\smallskip
\noindent
\textbf{What is the threat?}\\
For many existing data structures, the current state
is a function of both data and history \cite{golovinthesis}.
Hence, 
by analyzing the current state an adversary can derive
the state's history. Depending on the application the historical
information includes the following:
\begin{itemize*}
\item	Evidence of past existence of deleted data \cite{ficklebase}.
\item	The order in which votes were cast in a voting application \cite{naorcuckoo,blellochhashing}.
\item	Intermediate versions of published documents \cite{naorcuckoo}.
\end{itemize*}

To illustrate, consider the sample hash table data structure
of Figure \ref{fig:hitheory:hashtableexample}.
The sample hash table organizes the same data set differently
depending on the sequence of operations used.
Hence, an adversary that looks at the system memory can
potentially detect which operation sequence was used to get to the current
hash table state.


\smallskip
\noindent
\textbf{What is history independence?}\\
History independence is a characteristic of a data structure.
A data structure is said to be history independent if from the adversary's
point of view, the current data structure state 
is a function of data only and not of history.
Thus, the current state of a history independent
data structure reveals no information to the adversary about its history
other than what is inherently visible from the data itself.
We emphasize that history independence is concerned with historical information
that is revealed from data organization and not from the data. 

\smallskip

\noindent
\textbf{Are there different kinds of history independence?}\\
Naor \etal \cite{naorantipersistence} introduced two
notions of history independence -- weak history independence (WHI) and strong
history independence (SHI). 

WHI and SHI differ in the number of data structure states
an adversary is permitted to observe.
Under WHI, an adversary is permitted to observe only the current
data structure state. For example, as in case of a stolen laptop.
Under SHI, an adversary is permitted several observations
of data structure states throughout a sequence of operations.
For example, as in case of an insider adversary who can obtain a periodic memory dump.
For SHI, the adversary should be unable to identify which 
sequence of operations was applied between any two adjacent observations.
\smallskip
\subsection{Our Contributions}
\label{hitheory:intro:contributions}

WHI assumes a weak adversary while SHI is a very
powerful notion of history independence,
secure even against a computationally unbounded adversary \cite{golovinthesis}.
Currently, applications are restricted to using data structures with either WHI or SHI characteristics.
However, applications that do not fit into either WHI or SHI do exist.
For example, a journaling system that reveals no historical information other
than the last $k$ operations\footnote{We give additional examples in Section \ref{hi:deltahi}.}.
Further, WHI does not protect against insider adversaries and SHI results in inefficiency \cite{buchbinderbounds}.
Hence, there is a necessity for new notions of history independence targeted towards
specific application scenarios.

In this paper we take the first steps towards better understanding
the history independence spectrum and its applicability to systems.
The contributions of this paper are:
\begin{itemize}
\item	The exploration of abstract data types, data structures,
	machine models,
	and memory representations (Section \ref{hitheory:preliminaries}).
	This is an essential step towards formalizing history independence. 
\item	New game-based definitions of weak and strong history
	independence (Sections \ref{hi:weakhi} and \ref{hi:stronghi})
	that are more appropriate for the security community
	as compared to existing terminology \cite{naorantipersistence,hartlinecharacterizinghi}.
\item	A new notion of history independence termed
	$\Delta$ history independence ($\Delta$HI).
	$\Delta$HI centers around a generic game-based definition
	of history independence and is malleable enough to accommodate WHI, SHI, and 
	a broad spectrum of
	new history independence notions (Section \ref{hi:deltahi}).
	In addition, $\Delta$HI helps to quantify the history revealed or hidden
	by existing data structures most of which have been designed without
	history independence in mind.
\item	A general recipe for designing history independent systems
	and the recipe's use
	in designing a history independent file system (Section \ref{hi:hifs}).
\item	The design and evaluation of delete agnostic file system (DAFS). In DAFS,
	we re-design the file system layer to support new history independence
	notions. DAFS also increases file system resilience via journaling in the
	presence of history independence.
\end{itemize}

\section{Preliminaries}
\label{hitheory:preliminaries}

Formalizing history independence requires an understanding of data structures.
A data structure itself can be viewed as an implementation of an abstract data type (ADT)
on a machine model \cite{golovinthesis}.
An abstract data type (ADT) is a specification of operations for data
organization while a machine model represents a physical computing machine.


In the following, we provide an overview of ADTs, data structures, machine models,
and memory representations as proposed in \cite{golovinthesis} that are relevant to history independence.
Then, in Section \ref{hitheory:hi} we formalize history independence.


\subsection{Abstract Data Type (ADT)}
\label{hitheory:preliminaries:adt}

The specification of data organization techniques is often done via abstract
data types. The key characteristic of an ADT is that it specifies operations independently of any
specific implementation.
We use the concept proposed by Golovin \etal \cite{golovinthesis},
wherein an ADT is considered as a set of states
together with a set of operations. Each operation maps the current state to a new state.

\begin{definition}
\label{def:adt}
{\em Abstract Data Type (ADT)} \\
An ADT $\mathcal{A}$ is a pentuple $(\mathcal{S}, s_{\phi}, \mathcal{O}, \Gamma, \Psi)$, where $\mathcal{S}$ is a set of states; $s_{\phi} \in \mathcal{S}$ is the initial state; $\mathcal{O}$ is a set of operations; $\Gamma$ is a set of inputs; $\Psi$ is a set of outputs; and
each operation $o \in \mathcal{O}$ is a function
\footnote{For brevity, we model each ADT operation with an input and an output. ADT operations
may accept no inputs or produce no outputs. Hence, an ADT operation can also be
modeled as the following functions: $o : \mathcal{S} \rightarrow \mathcal{S}$,
$o : \mathcal{S} \rightarrow \mathcal{S} \times \Psi_{o}$, or
$o : \mathcal{S} \times \Gamma_{o} \rightarrow \mathcal{S}$.}
$o : \mathcal{S} \times \Gamma_{o} \rightarrow \mathcal{S} \times \Psi_{o}$, where
$\Gamma_{o} \subseteq \Gamma$ and $\Psi_{o} \subseteq \Psi$.
\end{definition}

The ADT is initialized to state $s_{\phi}$. When an operation $o \in \mathcal{O}$
with input $i \in \Gamma_{o}$ is applied to an ADT state $s_{1}$, the ADT outputs
$\tau \in \Psi_{o}$ and transitions to a state $s_{2}$. The transition from
$s_{1}$ to $s_{1}$ is denoted
as $o(s_{1},i) \rightarrow (s_{2},\tau)$.
\smallskip


\noindent
\textbf{The necessity of ADTs.}
History independence requires that from an adversary's
point of view, the current data structure state 
is a function of data only and not of history.
In the context of history independence, an ADT models the history
revealed by data only.
Since we view a data structure as an ADT implementation (Section \ref{hitheory:preliminaries:ds}),
the ADT helps to clearly identify what the data structure
is permitted or not permitted to reveal about past operations.
Any history revealed by an ADT state can be revealed
by the corresponding data structure state.
Any history hidden by an ADT state must be hidden by the
corresponding data structure state.

\smallskip
\noindent
\textbf{ADT as a graph.} 
%
We can imagine the ADT to be a directed graph $\mathcal{G}$, where
each vertex represents an ADT state and each edge is labeled with
an ADT operation along with an ADT input and an ADT output.
The label for an edge between two vertices represents the operation that
causes the transition between the corresponding states.
We call the graph $\mathcal{G}$, the state transition graph of the ADT.

Viewing an ADT as a graph will be particularly useful when we take
a deeper look into history independence in Section \ref{hitheory:preliminaries:hisemi}.

\subsection{Models of Execution}
\label{hitheory:preliminaries:execmodels}

An ADT is only a specification of operations for organizing data.
For more practical use, such as for efficiency analysis, concrete implementations of the
ADT operations are required. ADT implementations are provided via programs that can
be executed on a given machine model.
An ADT's implementation in a given machine model is a data structure
(Section \ref{hitheory:preliminaries:ds}). \\

\noindent
\textbf{RAM Model of Execution.}
\label{hitheory:preliminaries:rammodel}
The RAM model of execution models a traditional serial computer. The model
consists of two components, a central processing unit (CPU) and a 
random access memory (RAM). Both the CPU and RAM are finite state machines (FSM) \cite{modelsbook}.

The RAM consists of $m$ = $2^{u}$ storage locations. Each location is a $b$-bit word and
has a unique $\log_2 m$ bit address associated with it\footnote{This a bounded-memory RAM.}.
Two operations are permitted on a storage location in the RAM. First, a load
operation to access the $b$-bit bit word stored at the location. Second,
a store operation that copies a given $b$-bit word to the location. Typically,
the $b$-bit words are copied to or copied from CPU registers.

The CPU consists of $n$ $b$-bit registers and operates on a fetch-and-execute cycle \cite{modelsbook}.
The CPU has an associated set of instructions that it can perform. CPU instructions are
specified in a programming language.
A program in a RAM model is a finite sequence of programming language instructions.

A machine model can itself be considered as an ADT \cite{golovinthesis}. 
In this case, the set of ADT states is the set of all machine states, and the set of ADT
operations is the set of all machine programs. For the RAM model, the set of ADT states,
the set of inputs, and the set of outputs are all represented as bit strings.

\begin{definition}
\label{def:rammodel}
{\em Bounded RAM Machine Model} \\
A bounded RAM machine model $\mathcal{M}$ with $m$ $b$-bit memory words and $n$ $b$-bit CPU registers is a pentuple $(\mathcal{S}, s_{\phi}, \mathcal{P}, \Gamma, \Psi)$, where $\mathcal{S} = \{0, 1\}^{b(m+n)}$ is the set of machine states; $s_{\phi} \in \mathcal{S}$ is the initial state; $\mathcal{P}$ is the set of all programs of $\mathcal{M}$; $\Gamma = \{0, 1\}^{*}$ is a set of inputs; $\Psi = \{0, 1\}^{*}$ is a set of outputs; and each program $p \in \mathcal{P}$ is a function $p : \mathcal{S} \times \Gamma_{p} \rightarrow \mathcal{S} \times \Psi_{p}$,
where $\Gamma_{p} \subseteq \Gamma$ and $\Psi_{p} \subseteq \Psi$.
\end{definition}

$\mathcal{M}$ is initialized to state $s_{\phi}$.
If a program $p \in \mathcal{P}$ with input $i \in \Gamma_{p}$ is executed by the CPU when $\mathcal{M}$ is in state $s_{1}$, $\mathcal{M}$ outputs $\tau \in \Psi_{p}$ and transitions to a state $s_{2}$.
The transition from $s_{1}$ to $s_{2}$ is denoted as $p(s_{1},i) \rightarrow (s_{2},\tau)$.

\subsection{Data Structure}
\label{hitheory:preliminaries:ds}



An implementation for an ADT in a given machine model
is obtained as follows.
\begin{itemize}
\item	A machine representation is chosen for each ADT input and output.
\item	For each ADT operation a machine program is selected that provides
	the functionality desired from the ADT operation.
\item	A unique machine state	is selected to represent the initial ADT state.
\end{itemize}

\noindent
We encapsulate the above steps in the following data structure definition.


\begin{definition}
\label{def:ds}
{\em Data Structure} \\
A data structure implementation of an ADT $\mathcal{A}$ in a bounded RAM machine model $\mathcal{M}$
is a quadruple $(\alpha, \beta, \gamma, s_{0}^{\mathcal{M}})$, where
$\mathcal{A} = (\mathcal{S}, s_{\phi}, \mathcal{O}, \Gamma, \Psi)$ as per definition \ref{def:adt},
$\mathcal{M} = (\mathcal{S}^{\mathcal{M}}, s_{\phi}^{\mathcal{M}}, \mathcal{P}^{\mathcal{M}}, \Gamma^{\mathcal{M}}, \Psi^{\mathcal{M}})$ as per definition \ref{def:rammodel},
$\alpha : \Gamma' \rightarrow \Gamma^{\mathcal{M}}$,
$\beta : \Psi' \rightarrow \Psi^{\mathcal{M}}$,
$\gamma : \mathcal{O} \rightarrow \mathcal{P}^{\mathcal{M}}$,
$s_{0}^{\mathcal{M}} \in \mathcal{S^{\mathcal{M}}}$,
$\Gamma' \subseteq \Gamma$ and
$\Psi' \subseteq \Psi$.
%
\end{definition}

$\alpha$ is a mapping from ADT inputs to machine inputs. That is, for any ADT input
$i$, $\alpha(i)$ is the machine representation of the input.
Similarly, $\beta$ is the mapping from ADT outputs to machine outputs.
$\gamma$ is the mapping from ADT operations to machine programs.
For an ADT operation $o$,
$\gamma(o)$ is the machine program implementing $o$.
Finally, just as the ADT $\mathcal{A}$ is initialized to a unique state $s_{\phi}$, 
a unique machine state $s_{0}^{\mathcal{M}}$ is selected to represent
the initial data structure state.
\smallskip
\noindent
\textbf{Data Structure State.} 
A data structure state is a machine state.
The set of all data structure states consists of all
machine states that are reachable from the initial data structure state 
via execution of machine programs implementing the ADT operations.

\smallskip
\noindent
\textbf{State Transition Graph For Data Structure.}
\label{hitheory:preliminaries:ds:graph}
A data structure can be considered to be a directed graph $\mathcal{G}$, where
each vertex represents a data structure state and each edge is labeled with
a machine program implementing an ADT operation along with a machine input
and a machine output.
The label for an edge between two vertices represents the program that
causes the transition between the corresponding states.
We call the graph $\mathcal{G}$, the state transition graph of the data structure.

\subsection{A Semi-Formal Look At HI}
\label{hitheory:preliminaries:hisemi}

\textbf{The non-isomorphism problem.}
In Section \ref{hitheory:informal} we introduced the two existing
history independence notions -- weak history independence (WHI)
and strong history independence (SHI)\footnote{Both WHI and SHI are
formalized in Section \ref{hitheory:hi}.}.

Non-isomorphism between the state transition graph of an ADT and of
its data structure implementation breaks SHI.
WHI on the other hand can be achieved even when the ADT and data structure
state transition graphs are non-isomorphic. First, we look at how
non-isomorphism breaks SHI and then we discuss how to achieve
WHI in the presence of non-isomorphism.

\begin{table}[t]
\vspace{0.6cm}
\setlength{\tabcolsep}{1pt}
\caption[]{
\begin{footnotesize}
Sample paths from ADT and data structure state transition graphs                                                                                                                   .\label{hifs:table:paths}
\end{footnotesize}
}
{%

\begin{tabular}{|c|c|}

\hline
{\bf Path}	& {\bf From}\\
		& {\bf Figure}\\
\hline

$p_{\mathcal{A}} = s_{\phi} \rightarrow \{1\} \rightarrow \{1,3\} \rightarrow \{1,3,6\}$ &
\ref{fig:hitheory:graphs}(a)\\
\hline

$p'_{\mathcal{A}} = s_{\phi} \rightarrow \{1\} \rightarrow \{1,6\} \rightarrow \{1,3,6\}$ &
\ref{fig:hitheory:graphs}(a)\\
\hline
$p_{\mathcal{D}} = s^{\mathcal{M}}_{\phi} \rightarrow <<\_,1,\_>> \rightarrow <<3,1,\_>> \rightarrow <<3,1,6>>$ &
\ref{fig:hitheory:graphs}(b)\\
\hline
$p'_{\mathcal{D}} = s^{\mathcal{M}}_{\phi} \rightarrow <<\_,1,\_>> \rightarrow <<6,1,\_>> \rightarrow <<6,1,3>>$ &
\ref{fig:hitheory:graphs}(b)\\
\hline

\end{tabular}}

\vspace{-0.4cm}
\end{table}

\begin{figure*}[th]
\begin{center}
\vspace{0.9 cm}
\subfigure{
\hspace{-0.4cm}
\includegraphics[scale=0.26]{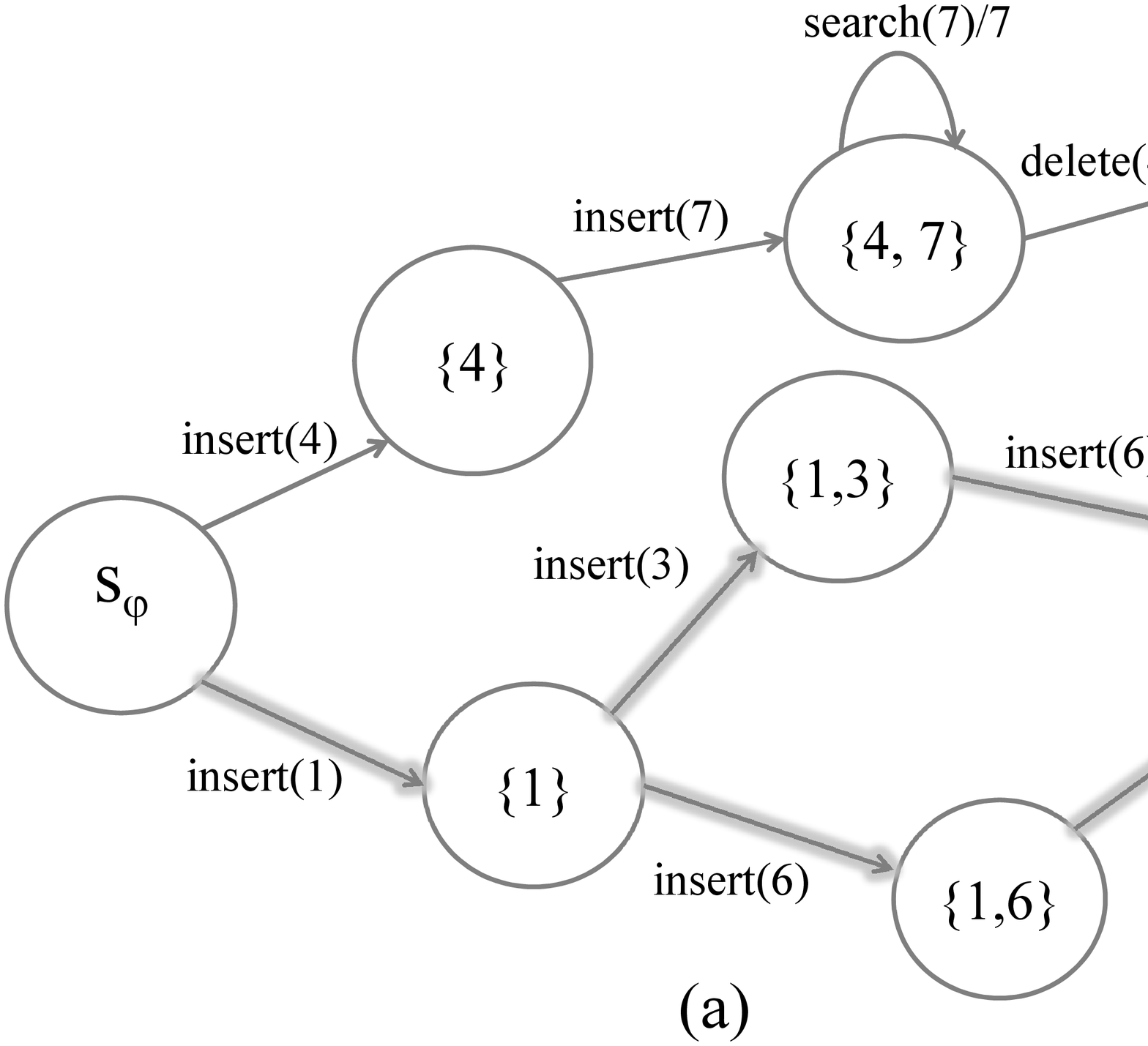}
}
\hspace{-0.2cm}
\subfigure{
\includegraphics[scale=0.26]{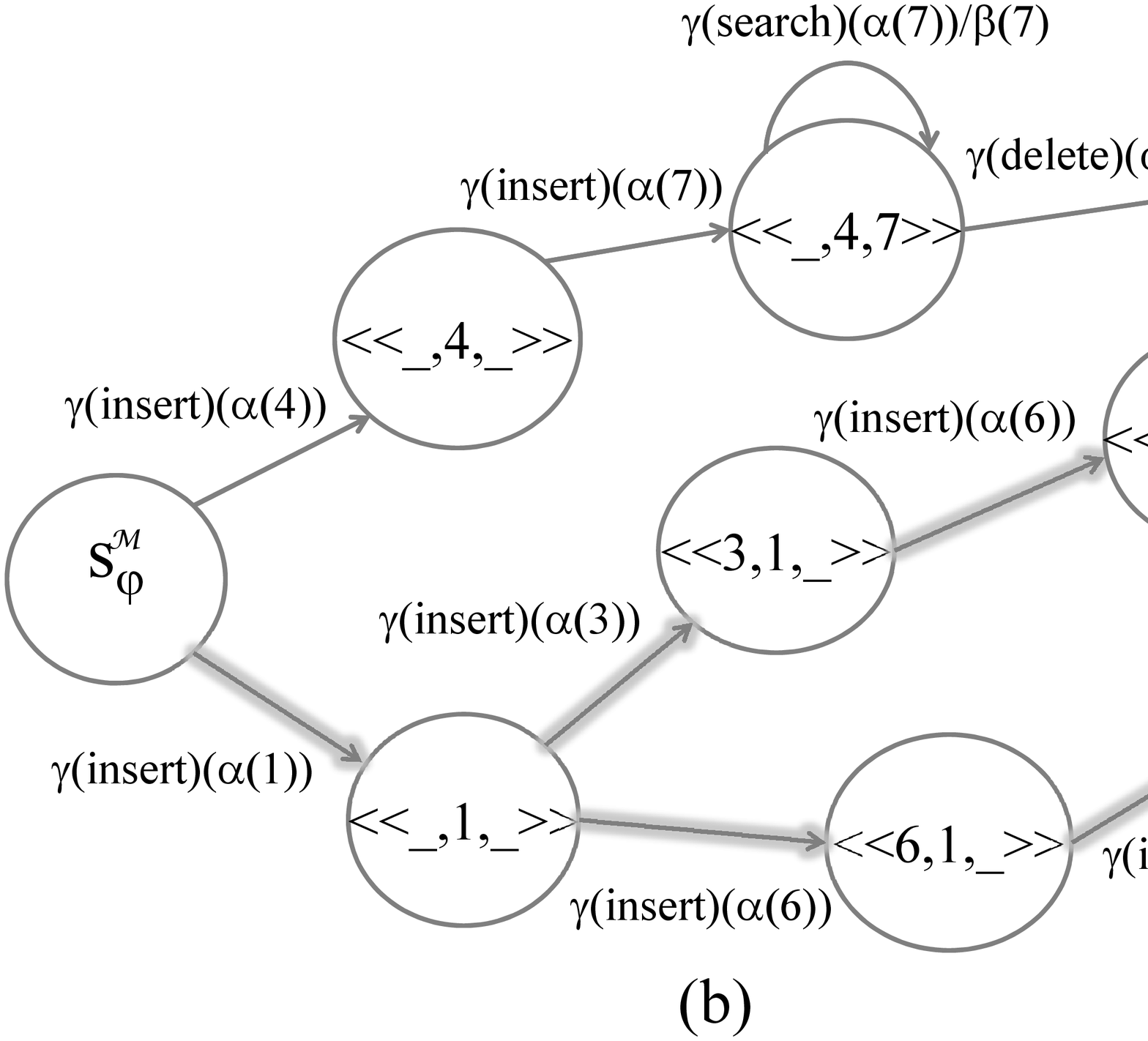}
}
\vspace{-2.0cm}
\caption[State transition graphs]{
\begin{footnotesize}
Example of non-isomorphism between ADT and data structure state transition graphs.
(a) Partial state transition graph for sample hash table ADT.
(b) Partial state transition graph for sample array-based hash table data structure
implementation using linear probing.
Number of hash table buckets is 3 and the hash function is $h(key) = key\;\%\;3$.
$\gamma$(insert), $\gamma$(search) and $\gamma$(delete) denote the machine programs
implementing the ADT operations insert, search and delete, respectively.
$o(i)/t$ denotes that ADT operation $o$ takes input $i$ and produces output $t$.
Similarly, $\gamma(o)(\alpha(i))/\beta(t)$ denotes that program $\gamma(o)$ takes input $\alpha(i)$
and produces output $\beta(t)$.
$\alpha(i)$ and $\beta(t)$ are the machine representations of the ADT input $i$ and ADT
output $t$, respectively.
Note that the vertices in figure (b) represent data structure states.
In the RAM model these will be bit strings. However, to convey data semantics we denote the
hash table array as $<<a_{0}, a_{1}, a_{2}>>$, where $a_{0}$, $a_{1}$, and $a_{2}$
are elements at buckets $0$ , $1$ and $2$, respectively. Underscore denotes an empty bucket.
Highlighted paths are referenced in Table \ref{hifs:table:paths},
and in Section \ref{hitheory:preliminaries:hisemi}\\\\.
\end{footnotesize}
\label{fig:hitheory:graphs}}
\vspace{-1.0cm}
\end{center}
\end{figure*}

\smallskip
\noindent
\textbf{Why non-isomorphism breaks SHI?}
Non-isomorphism and thus the need for SHI
arises when an ADT state has multiple memory representations
\footnote{Many existing data structures have this property and are
hence, not history independent. Common
examples include the linked list, hash tables and B-Trees. In these data
structures different insertion order of the same set of data elements (i.e.,
the same ADT state)
results in different memory representations.}.
We will precisely define memory representations for ADT states
in Section \ref{hitheory:preliminaries:memrep}.
For now, it suffices to say the following:
A memory representation for an ADT state that is
reachable from the initial ADT state via a sequence of ADT operations is
the machine state reachable from the initial data structure state via
the corresponding program sequence.
For example, in Figure \ref{fig:hitheory:graphs}, the data structure states $<<3,1,6>>$ and
$<<6,1,3>>$ are memory representations of the ADT state $\{1,3,6\}$.

%
%

To illustrate how non-isomorphism breaks SHI, consider the example graphs
from Figure \ref{fig:hitheory:graphs}, example paths
from Table \ref{hifs:table:paths},
and an adversary with access to the initial ADT state $s_{\phi}$,
the initial data structure state $s^{\mathcal{M}}_{\phi}$,
the current ADT state $\{1,3,6\}$, and
the current data structure state which is either $<<3,1,6>>$ or $<<6,1,3>>$.

By looking at the ADT states alone, the adversary cannot determine which
sequence of ADT operations
was used to arrive at the current ADT state $\{1,3,6\}$.
This is because there are two paths $p_{\mathcal{A}}$ and $p'_{\mathcal{A}}$
between the ADT states $s_{\phi}$ and $\{1,3,6\}$. Moreover, the ADT states
carry no information about the exact path used to transition from
$s_{\phi}$ to $\{1,3,6\}$.
Hence, the data alone gives the adversary no advantage in guessing
which sequence of ADT operations was applied in the past.

Now, by looking at the current data structure state,
the adversary can clearly identify which sequence of machine programs
was used to arrive at the current data structure state.
The current data structure state is either $<<3,1,6>>$ or $<<6,1,3>>$.
There is a unique path from initial data structure state $s^{\mathcal{M}}_{\phi}$
to each of the states $<<3,1,6>>$ and $<<6,1,3>>$.
Hence, by observing the current data structure state, the adversary can identity whether
path $p_{\mathcal{D}}$ or path $p'_{\mathcal{D}}$ was used to transition
from state $s^{\mathcal{M}}_{\phi}$ to the current data structure state. 
Identification of the path in the data structure state transition graph
informs the adversary of the program sequence used.
Knowledge of the program sequence used in-turn tells the adversary
the sequence of ADT operations used. In conclusion, the data structure
implementation gives the adversary an advantage in guessing the history
of past execution, thereby breaking history independence.


\medskip
\noindent
\textbf{How can we achieve history independence?}
The two known ways to make data history independent:


\begin{enumerate}
\item	\emph{For SHI, make the ADT and the data structure state transition graphs isomorphic:}\\
	Data structures with state transition graphs isomorphic to their ADT's state
	transition graph are referred to as canonically represented  
	data structures. We discuss the necessity of canonical representations for
	SHI in Section \ref{hi:canonical}. SHI implies WHI.
\item	\emph{For WHI, make the data structure state transitions randomized:}\\
	Randomization here refers to the selection of the data structure state representing
	the corresponding ADT state. To illustrate, consider the example
	graphs from Figure \ref{fig:hitheory:graphs}.
	Both data structure states $<<3,1,6>>$ and $<<6,1,3>>$
	are valid memory representations of the ADT state $\{1,3,6\}$. For WHI, the choice
	between data structure states $<<3,1,6>>$ and $<<6,1,3>>$ to represent the ADT
	state $\{1,3,6\}$ must be random.
	
	\smallskip
	
	As shown in Figure \ref{fig:hitheory:graphrand}, randomization translates to addition of new
	paths in the data structure state transition graph
	to ensure the following:
	For any two ADT states $s_{0}$ and $s_{1}$, if there is a path
	in the ADT state transition graph between $s_{0}$ and $s_{1}$,
	then, there must be a path from all memory representations
	of ADT state $s_{0}$ to all memory representations of ADT state $s_{1}$ in the
	data structure's state transition graph.
	The choice of path in the data structure state transition graph
	between representations of ADT states $s_{0}$ and $s_{1}$
	is then made at random.
	
	\smallskip
	
	From the adversary's point of view randomization makes all memory representations
	of an ADT state equally likely to occur. Hence, observation
	of a specific representation gives the adversary no advantage in guessing
	the sequence of machine programs that led to the current data
	structure state. Since the adversary cannot identify the sequence
	of machine programs used, the adversary is also unable to identify the
	sequence of ADT operations that led to the current ADT state.
\end{enumerate}

\begin{figure}[th]
\begin{center}
\vspace{1.5cm}
\includegraphics[scale=0.26]{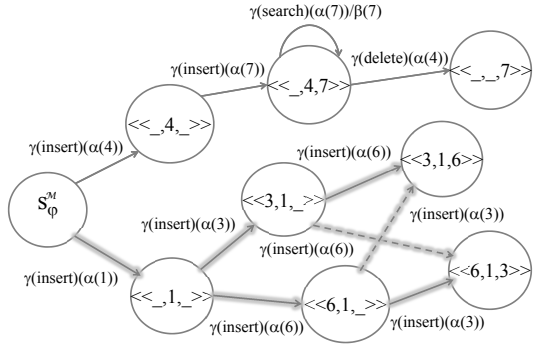}
\vspace{-2.4cm}
\caption[Using randomization to achieve history independence]
{
\begin{footnotesize}
Using randomization to achieve history independence.
The dotted lines indicate new transitions added to the hash table data
structure state transition graph. Amongst all edges with the same
starting node and the same label, the choice of edge for state
transition is made at random.
\end{footnotesize}
}

\label{fig:hitheory:graphrand}
\vspace{-0.9cm}
\end{center}
\end{figure}


\subsection{Memory Representations}
\label{hitheory:preliminaries:memrep}


In the discussion of nonisomorphism and history independence above,
we informally introduced memory representations
for ADT states. We also showed that history
independence comes into picture when an ADT state has multiple
memory representations.
In short, the memory representation for an ADT state that is
reachable from the initial ADT state via a sequence of ADT operations, is
the machine state reachable from the initial data structure state via
the corresponding program sequence. We formally define memory representations
here and use them later in Section \ref{hitheory:hi} for the game-based definitions 
of history independence.



Let $\delta = \langle o_{1}, o_{2}, ..., o_{n} \rangle$ be a sequence of ADT operations and
$I = \langle i_{1}, i_{2}, ..., i_{n} \rangle$ be a sequence of ADT inputs.
We denote by $\mathbb{O}(\delta, s_{0}, I)$ the application of the ADT operation sequence
$\delta$ on ADT state $s_{0}$.

\noindent
\begin{footnotesize}
$\mathbb{O}(\delta, s_{0}, I) = \begin{cases} 
				    s_{0} & \mbox{if} |\delta| = 0\\
				    (s_{n},\tau_{n})|o_{k}(s_{k-1}, i_{k}) \rightarrow (s_{k}, \tau_{k}); \\ 1 \leq k \leq n & \mbox{otherwise} 
                               \end{cases}$
 
\end{footnotesize}


If $\delta$ is empty no state transition occurs and no outputs are produced.
For nonempty sequence $\delta$, $s_{n}$ and $\tau_{n}$ denote the ADT state and the ADT output, respectively, 
produced by the final operation in sequence $\delta$.

To summarize, we denote by $\mathbb{O}(\delta, s_{0}, I) \rightarrow (s_{n}, \tau_{n})$ that
the ADT operation sequence $\delta$ when applied to the ADT state $s_{0}$
with ADT input sequence $I$, results in the ADT state $s_{n}$ and ADT
output $\tau_{n}$.

Now, let $\delta^{\mathcal{M}} = \chi(\delta) = \langle \gamma(o_{1}), \gamma(o_{2}), ..., \gamma(o_{n}) \rangle$
be a sequence of machine programs corresponding to the ADT operation sequence $\delta$.
$\gamma(o_{k})$ is the machine program implementing the ADT operation $o_{k}$.
Then, we denote by  
$\mathbb{O}^{\mathcal{M}}(\delta^{\mathcal{M}}, s^{\mathcal{M}}_{0}, I)$ the application
of program sequence $\delta^{\mathcal{M}}$ on a machine state $s^{\mathcal{M}}_{0}$.

\noindent
\begin{footnotesize}
$\mathbb{O}^{\mathcal{M}}(\delta^{\mathcal{M}}, s^{\mathcal{M}}_{0}, I) = 
		\begin{cases}
					s^{\mathcal{M}}_{0} & \mbox{if}\;|\delta^{\mathcal{M}}| = 0\\
					(s^{\mathcal{M}}_{n}, \beta(\tau_{n})) \\ | \gamma(o_{k})(s^{\mathcal{M}}_{k-1}, \alpha(i_{k})) \rightarrow \{s^{\mathcal{M}}_{k}, \beta(\tau_{k})\}; \\ 1 \leq k \leq n & \mbox{otherwise}
		\end{cases}$
\end{footnotesize}
		


Here, $\alpha(i)$ and $\beta(\tau)$ denote the machine representations for an
ADT input $i$ and an ADT output $\tau$, respectively. $s^{\mathcal{M}}_{n}$ and
$\beta(\tau_{n})$ are the machine state and the machine
output, respectively, produced by the final program in sequence $\delta^{\mathcal{M}}$.

In summary, we denote by $\mathbb{O}^{\mathcal{M}}(\delta^{\mathcal{M}}, s^{\mathcal{M}}_{0}, I) \rightarrow (s^{\mathcal{M}}_{n}, \tau^{\mathcal{M}}_{n})$ that a program sequence $\delta^{\mathcal{M}}$ when applied to a machine state
$s^{\mathcal{M}}_{0}$ with an ADT input sequence $I$, results in a machine state
$s^{\mathcal{M}}_{n}$ and a machine output $\tau^{\mathcal{M}}_{n}$.

\begin{definition}
\label{def:mr}
{\em Memory Representations} \\
The set of memory representations of an ADT state $s$, denoted by $m(s)$,
is the set of data structure states, defined as

\noindent
\begin{footnotesize}
$m(s) = \begin{cases}
	
				 s^{\mathcal{M}}_{0}  & \mbox{if}\;s = s_{\phi}\\
				 s^{\mathcal{M}} \; | \; \mathbb{O}^{\mathcal{M}}(\delta_{k}^{\mathcal{M}}, s^{\mathcal{M}}_{0}, I_{k}) \rightarrow \{s^{\mathcal{M}}, \beta(\tau_{|\delta_{k}|})\}; \\ 1 \leq k \leq n  & \mbox{otherwise}
		
\end{cases}$
\end{footnotesize}
\smallskip

where,
$s^{\mathcal{M}}_{0}$ is the initial data structure state;
$I_{1}, I_{2}, ..., I_{n}$ are sequences of ADT inputs;
$\delta_{1}, \delta_{2}, ..., \delta_{n}$ are ADT operation sequences,
each of which when applied to the initial ADT state $s_{\phi}$ results in state $s$,
that is $\mathbb{O}(\delta_{k}, s_{\phi}, I_{k}) \rightarrow (s, \tau_{k})$;
$\delta_{k}^{\mathcal{M}} = \chi(\delta_{k})$ denotes the program sequence corresponding to ADT operation sequence $\delta_{k}$;
$|I_{k}| = |\delta_{k}|$;
$1 \leq k \leq n$.
\end{definition}
%

Here $m$ is the mapping $m : \mathcal{S} \rightarrow 2^{\mathcal{S}^\mathcal{D}}$,
where $S$ is the set of all ADT states,
$\mathcal{S}^\mathcal{D}$ is the set of all data structure states, and
$2^{\mathcal{S}^\mathcal{D}}$ denotes the power set of $\mathcal{S}^\mathcal{D}$.

\subsubsection{Dealing With Infinite ADT State Space}
\label{hitheory:preliminaries:memrep:infinitestatespace}

The set of machine states for the bounded RAM
model is finite since there are finite number of available bits.
Hence, a data structure implementation on a bounded RAM
model can only have a finite number of data structure states.
The set of ADT states on the other hand can be infinite.
For an ADT with infinite states, a data structure implementation
will be unable to uniquely represent all the ADT states.
The case of infinite ADT states is of particular importance for
canonically represented data structures that require the state
transition graphs of the ADT and of the data structure to be
isomorphic, that is, each ADT state has a unique memory representation.

We will look at canonical representations in detail within the
context of history independence in Section \ref{hi:canonical}.
Here, we list two work-arounds to dealing with infinite ADT state space.

\begin{enumerate}
\item	Redefine the ADT, such that the number of ADT states is less than or equal to
	the number of machine states.
\item	Design each machine program implementing an ADT operation, such that the program
	produces a special output when an ADT state cannot be represented using
	the available machine bits. For example, an out-of-memory error.
\end{enumerate}

%
%


  \section{History Independence}
\label{hitheory:hi}


%

Now that we are equipped with the necessary concepts
(ADT, RAM machine model, data structure, and memory representations),
we proceed to formalize history independence.
We give new game-based definitions for both WHI and SHI
(Sections \ref{hi:weakhi} and \ref{hi:weakhi}).
The new definitions are equivalent to existing proposals \cite{naorcuckoo,hartlinecharacterizinghi}
but more appropriate for the security community since they follow
the game-based construction of semantic security.
Further, our new definitions naturally extend to accommodate
other notions of history independence beyond WHI and SHI. 
\subsection{Weak History Independence (WHI)}
\label{hi:weakhi}

WHI was introduced for scenarios wherein an adversary observes only
the current data structure state. For example, as in the case of a stolen
laptop.

Informally,
a data structure is said to be weakly history independent if for any two sequences
of ADT operations $\delta_{1}$ and $\delta_{2}$, that take the ADT from
initialization to a state $s$, observation of any memory representation
of state $s$ gives the adversary no advantage in guessing
whether sequence $\delta_{1}$ or $\delta_{2}$ was used
to get to $s$.

We define weak history independence (WHI) by the following game:
\begin{framed}
\begin{footnotesize}
Let $\mathcal{A} = (\mathcal{S}, s_{\phi}, \mathcal{O}, \Gamma, \Psi)$ be an ADT,
$\mathcal{M} = (\mathcal{S}^{\mathcal{M}}, s_{\phi}^{\mathcal{M}}, \mathcal{P}^{\mathcal{M}}, \Gamma^{\mathcal{M}}, \Psi^{\mathcal{M}})$ be a bounded RAM machine model, 
and $\mathcal{D} = (\alpha, \beta, \gamma, s_{0}^{\mathcal{M}})$ be a data structure implementing $\mathcal{A}$ in $\mathcal{M}$, as per definitions \ref{def:adt}, \ref{def:rammodel} and \ref{def:ds}, respectively.

\begin{enumerate}
\item	A probabilistic polynomial time-bounded adversary selects the following:
	An ADT state $s$; two sequences of ADT operations $\delta_{0}$ and $\delta_{1}$; and two sequences of ADT inputs $I_{0}$ and $I_{1}$;
	such that
			$\mathbb{O}(\delta_{0}, s_{\phi}, I_{0}) \rightarrow (s, \tau)$ and
			$\mathbb{O}(\delta_{1}, s_{\phi}, I_{1}) \rightarrow (s, \tau)$.
	Both $\delta_{1}$ and $\delta_{2}$ take the ADT from the initial state $s_{\phi}$
	to state $s$ producing the same output $\tau$.
\item	The adversary sends $s$, $\delta_{0}$, $\delta_{1}$, $I_{0}$ and $I_{1}$ to the challenger.
\item	The challenger flips a fair coin $c \in \{0,1\}$ and computes 
	$\mathbb{O}^{\mathcal{M}}(\delta_{c}^{\mathcal{M}}, s^{\mathcal{M}}_{0}, I_{c}) \rightarrow (s^{\mathcal{M}}, \tau^{\mathcal{M}})$,
	where $\delta_{c}^{\mathcal{M}} = \chi(\delta_{c})$ and $\tau^{\mathcal{M}} = \beta(\tau)$.
	That is, the challenger applies the program sequence $\delta_{c}^{\mathcal{M}}$
	corresponding to the ADT operation sequence $\delta_{c}$ to the data
	structure initialization state $s^{\mathcal{M}}_{0}$, resulting in a memory representation
	$s^{\mathcal{M}}$ of ADT state $s$ and a machine output $\tau^{\mathcal{M}}$.
\item	The challenger sends the memory representation $s^{\mathcal{M}}$ to the adversary.
\item	The adversary outputs $c' \in \{0,1\}$.
\end{enumerate}

\begin{center}
The adversary wins the game if $c' = c$.
\end{center}

$\mathcal{D}$ is said to be weakly history independent if the advantage of the adversary for winning the game, 
defined as $|Pr[c^{'}=c]-1/2|$ is negligible
(where ``negligible'' is defined over any implementation-specific security parameters of the programs in $\mathcal{P^M}$).\footnotemark
\end{footnotesize}

\end{framed}
\footnotetext{For example PRNG seeds when using randomization, or keys when using encryption.}
Since WHI permits the adversary to make a single observation,
the adversary is allowed to choose the end state only in step 1.
The starting state for the chosen ADT operation sequences is
always the initial ADT state $s_{\phi}$.
Recall from the data structure definition (Section \ref{hitheory:preliminaries:ds})
that the initial ADT state has a fixed memory representation, which is the initial
data structure state $s^{\mathcal{M}}_{0}$. 
Hence, in step 3, the challenger applies the adversary-selected sequence to the 
memory representation $s^{\mathcal{M}}_{0}$ of $s_{\phi}$.

If the adversary is able to identify the
ADT operation sequence chosen by the challenger in step 3,
then the adversary wins
the game.
Winning the game implies the adversary was able to determine the operation sequence
that led to the current ADT state by observing the state's memory representation,
thereby breaking WHI.
\subsection{Strong History Independence (SHI)}
\label{hi:stronghi}

Unlike WHI, SHI is applicable when an adversary
can observe multiple memory representations throughout a sequence of operations
For example, as in case of an insider who can obtain a periodic memory dump. 
SHI requires that the adversary must not gain any additional information about the sequence
of operations between any two adjacent observations than what is inherently
available from the corresponding ADT states.

Informally, a data structure is said to be strongly history independent if
for any two sequences of ADT operations $\delta_{1}$ and $\delta_{2}$, that
take the ADT from a state $s_{1}$ to a state $s_{2}$, observations of
any memory representations of states $s_{1}$ and $s_{2}$
give the adversary no advantage in guessing
whether sequence $\delta_{1}$ or $\delta_{2}$ was used
to go from $s_{1}$ to $s_{2}$.

We define strong history independence (SHI) by the following game:

\begin{framed}
\begin{footnotesize}
\medskip
Let $\mathcal{A} = (\mathcal{S}, s_{\phi}, \mathcal{O}, \Gamma, \Psi)$ be an ADT, $\mathcal{M} = (\mathcal{S}^{\mathcal{M}}, s_{\phi}^{\mathcal{M}}, \mathcal{P}^{\mathcal{M}}, \Gamma^{\mathcal{M}}, \Psi^{\mathcal{M}})$ be a bounded RAM machine model, 
and $\mathcal{D} = (\alpha, \beta, \gamma, s_{0}^{\mathcal{M}})$ be a data structure implementing $\mathcal{A}$ in $\mathcal{M}$, as per definitions \ref{def:adt}, \ref{def:rammodel} and \ref{def:ds} respectively.

\begin{enumerate}
\item	A probabilistic polynomial time-bounded adversary selects the following.
	\begin{itemize}
	\item	Two ADT states $s_{1}$ and $s_{2}$;
		two sequences of ADT operations $\delta_{0}$ and $\delta_{1}$; and two sequences of ADT inputs $I_{0}$ and $I_{1}$;
		such that
				$\mathbb{O}(\delta_{0}, s_{1}, I_{0}) \rightarrow (s_{2}, \tau)$ and
				$\mathbb{O}(\delta_{1}, s_{1}, I_{1}) \rightarrow (s_{2}, \tau)$.
		Both $\delta_{1}$ and $\delta_{2}$ take the ADT from state $s_{1}$
		to state $s_{2}$ producing the same output $\tau$.
	\item 	A memory representation $s_{1}^{\mathcal{M}}$ of ADT state $s_{1}$.
	\end{itemize}
\item	The adversary sends $s_{1}$, $s_{1}^{\mathcal{M}}$, $\delta_{0}$, $\delta_{1}$, $I_{0}$ and $I_{1}$ to the challenger.
\item	The challenger flips a fair coin $c \in \{0,1\}$ and computes 
	$\mathbb{O}^{\mathcal{M}}(\delta_{c}^{\mathcal{M}}, s^{\mathcal{M}}_{1}, I_{c}) \rightarrow (s^{\mathcal{M}}_{2}, \tau^{\mathcal{M}})$,
	where $\delta_{c}^{\mathcal{M}} = \chi(\delta_{c})$ and $\tau^{\mathcal{M}} = \beta(\tau)$.
	That is, the challenger applies the program sequence $\delta_{c}^{\mathcal{M}}$
	corresponding to the ADT operation sequence $\delta_{c}$ to the data
	structure state $s^{\mathcal{M}}_{1}$, resulting in a memory representation
	$s^{\mathcal{M}}_{2}$ of state $s_{2}$ and a machine output $\tau^{\mathcal{M}}$.
\item	The challenger sends the memory representation $s^{\mathcal{M}}_{2}$ to the adversary.
\item	The adversary outputs $c' \in \{0,1\}$.
\end{enumerate}

\begin{center}
The adversary wins the game if $c' = c$.
\end{center}

$\mathcal{D}$ is said to be strongly history independent if the advantage of the adversary for winning the game, 
defined as $|Pr[c^{'}=c]-1/2|$ is negligible
(where ``negligible'' is defined over any implementation-specific security parameters of the programs in $\mathcal{P^M}$).
\end{footnotesize}
\end{framed}
Winning the game means that the adversary was able to determine the operation sequence
that took the ADT from state $s_{1}$ to state $s_{2}$, thereby breaking SHI.

SHI implies WHI. If the ADT state $s_{1}$ chosen by the adversary in step
1 is the initial ADT state $s_{\phi}$, then the SHI game reduces to the WHI game of
Section \ref{hi:weakhi}.
\subsection{Equivalence to Existing Definitions}
\label{hitheory:equivalence}
WHI and SHI were first introduced by Naor \etal \cite{naorantipersistence}.
Later, Hartline \etal \cite{hartlinecharacterizinghi}
introduced new definitions for WHI and SHI. However, Hartline \etal showed that their definitions
although less complex are equivalent to the ones proposed by Naor \etal
Our game-based definitions of WHI and SHI (Sections \ref{hi:weakhi} and \ref{hi:stronghi})
differ slightly from the definitions by Hartline \etal
Specifically, Hartline \etal assume a computationally unbounded adversary.
We address history independence in the presence of computationally bounded adversaries
to be more in-line with reality.
Further, new definitions were necessary to overcome impreciseness
in existing definitions and to develop a framework for new history
independence notions beyond WHI and SHI. We detail in the following.


Hartline \etal defined weak history independence as follows. 
\begin{definition}
\label{def:hshi}
{\em Weak History Independence (WHI)} \\
A data structure implementation is weakly history independent if, for any two
sequences of operations X and Y that take the data structure from initialization
to state A, the distribution over memory after X is performed 
is identical to the distribution after Y. That is:
\end{definition}
\begin{center}
$(\phi \overset{X}{\longrightarrow} A) \land (\phi \overset{Y}{\longrightarrow} A)
\Longrightarrow
\forall$ \textbf{a} $\in A$,
\textbf{Pr\big[$\phi$ $\overset{X}{\longrightarrow}$ a\big] = Pr\big[$\phi$ $\overset{Y}{\longrightarrow}$ a\big]}
\end{center}
In the above definition, $\phi \overset{X}{\longrightarrow} B$ denotes that a operation
sequence $X$ when applied to the initial state $\phi$, results in state $A$.
The notation \textbf{a} $ \in A$ means than \textbf{a} is a memory representation of state $A$.
\textbf{Pr\big[$\phi$ $\overset{X}{\longrightarrow}$ a\big]}
denotes the probability that a sequence $X$ when applied to initial state $\phi$, results
in representation \textbf{a}.

\medskip
\noindent
\textbf{Reconciling terminology}
\smallskip

Hartline \etal do not formalize the concepts of data structure, data structure state
and memory representations. A data structure's state is referred to as the data structure's
content. Memory representation of a data structure state is the physical contents
of memory that represent that state. We note that Naor \etal also used the
same terminology in their definitions.

The WHI definition by Hartline \etal is imprecise in the following.
\begin{itemize}
\item	Operation inputs and outputs are not considered.
%
\item	The same operation sequences are considered applicable to both data structure
	states and to memory representations. The mechanisms for the 
	applicability are not specified.
\item	The connection between a data structure's state and the state's memory
	representations is not precisely specified.
\end{itemize}

Following Golovin \etal \cite{golovinthesis} we use the ADT concept to model logical
states (or content) and define a data structure as an ADT's implementation (Sections \ref{hitheory:preliminaries:adt} - \ref{hitheory:preliminaries:ds}).
A data structure state is therefore the memory representation of an ADT
state.
Separating ADT and data structure concepts enables us to precisely define
memory representations (Section \ref{hitheory:preliminaries:memrep})
for various machine models;
understand history independence from the perspective of state transition
graphs; and to build a framework for defining new history
independence notions other than SHI and WHI (Section \ref{hitheory:hi:generalizinghi}).

To summarize the differences in terminology, what Hartline \etal refer to as data structure state in definition \ref{def:hshi}
is an ADT state in our model. Further, we refer to a memory representation in definition \ref{def:hshi}
as a data structure state.

For WHI, Hartline \etal require a data structure implementation to satisfy the following:
\begin{center}\label{eq:hwhi}
$(\phi \overset{X}{\longrightarrow} A) \land (\phi \overset{Y}{\longrightarrow} A)
\Longrightarrow
\forall \textbf{a} \in A,
Pr\big[\phi \overset{X}{\longrightarrow} \textbf{a}\big] = Pr\big[\phi \overset{Y}{\longrightarrow} \textbf{a}\big]$
\end{center}

Our game-based definition of WHI poses the following slightly relaxed requirement:
\begin{center}\label{eq:whi}
$(\phi \overset{X}{\longrightarrow} A) \land (\phi \overset{Y}{\longrightarrow} A)
\Longrightarrow
\forall \textbf{a} \in A,
|Pr\big[\phi \overset{X}{\longrightarrow} \textbf{a}\big] - Pr\big[\phi \overset{Y}{\longrightarrow} \textbf{a}\big]|\;is\;negligible$
\end{center}

We will show that the game-based WHI definition (Section \ref{hi:weakhi}) is equivalent to
statement \ref{eq:whi}, that is, a data structure preserves WHI only if statement
\ref{eq:whi} is true. However, before we show the equivalence we point out the
necessity for the difference between conditions \ref{eq:hwhi} and \ref{eq:whi}.

As discussed in Section \ref{hitheory:preliminaries:hisemi}, there are two known ways to achieve history
independence. The first way is to make the ADT and data structure state
transition graphs isomorphic. The second way is to make the data structure
state transition graph randomized.
The requirement for identical memory distributions as per statement \ref{eq:hwhi}
rules out the use of randomization to achieve history independence\footnote{The use of randomization to achieve weak history independence is discussed in Section \ref{hi:randomization}.}. 
A randomized data structure implementation will rely on pseudo random generators.
The security of pseudo random generators relies on computational indistinguishability.
Therefore, the relaxed requirement of negligibility introduced in statement \ref{eq:whi}
is in fact not a limitation, but
rather a reconciliation of the definition by Hartline \etal with reality 
where we have computationally bounded adversaries.

Although Naor \etal proposed a WHI definition that requires identical distributions,
they also used randomization to design a history independent data structure.

\medskip
\noindent
\textbf{Equivalence of WHI definitions}
\smallskip

We now show that our gamed-based WHI definition (Section \ref{hi:weakhi}) is equivalent to a WHI
definition based on statement \ref{eq:whi}.

We rewrite statement \ref{eq:whi} for consistent notations as follows.
\begin{center}\label{eq:whi2}
$(s_{\phi} \overset{\delta_{0}}{\longrightarrow} s) \land (s_{\phi} \overset{\delta_{1}}{\longrightarrow} s)
\Longrightarrow
\forall s^{\mathcal{M}} \in s,
|Pr\big[s^{\mathcal{M}}_{\phi} \overset{\delta^{\mathcal{M}}_{0}}{\longrightarrow} s^{\mathcal{M}}\big] - Pr\big[s^{\mathcal{M}}_{\phi} \overset{\delta^{\mathcal{M}}_{1}}{\longrightarrow} s^{\mathcal{M}}\big]|\;is\;negligible
$\end{center}
Here, $\delta_{0}$ and $\delta_{1}$ are two ADT operation sequences that take
the ADT from initial state $s_{\phi}$ to state $s$.
$s_{\phi}$ and $s^{\mathcal{M}}_{\phi}$ are the initial ADT and the initial
data structure states, respectively.
$\delta^{\mathcal{M}}_{0}$ and $\delta^{\mathcal{M}}_{1}$ are the machine
programs corresponding to ADT operation sequences $\delta_{0}$ and $\delta_{1}$,
respectively.

History independence only considers cases where the condition
$(s_{\phi} \overset{\delta_{0}}{\longrightarrow} s) \land (s_{\phi} \overset{\delta_{1}}{\longrightarrow} s)$
is true, that is, both sequences $\delta_{0}$ and $\delta_{1}$ take the ADT to the
same end state $s$. Otherwise, the ADT states themselves reveal history.

We therefore have two cases to consider

\medskip
\textbf{Case 1:}
The distributions are computationally distinguishable, that is, 
\begin{center}
$\exists s^{\mathcal{M}} \in s\;such\;that\;|Pr\big[s^{\mathcal{M}}_{\phi} \overset{\delta^{\mathcal{M}}_{0}}{\longrightarrow} s^{\mathcal{M}}\big] - Pr\big[s^{\mathcal{M}}_{\phi} \overset{\delta^{\mathcal{M}}_{1}}{\longrightarrow} s^{\mathcal{M}}\big]|\;is\;non-negligible$.
\end{center}

Now consider the following adversarial strategy. Given a data structure state
$s^{\mathcal{M}}$ in step 4 of the WHI game, the adversary outputs $c$ such
that $\delta^{\mathcal{M}}_{c}$ has a higher probability of producing $s^{\mathcal{M}}$.
For such an adversarial strategy $\big|Pr[c' = c] - \frac{1}{2}\big|$ is
non-negligible for some $s^{\mathcal{M}}$. Therefore, the data structure
implementation does not preserve WHI.

\medskip
\textbf{Case 2:}
The distributions are computationally indistinguishable, that is, 
\begin{center}
$\forall s^{\mathcal{M}} \in s,
|Pr\big[s^{\mathcal{M}}_{\phi} \overset{\delta^{\mathcal{M}}_{0}}{\longrightarrow} s^{\mathcal{M}}\big] - Pr\big[s^{\mathcal{M}}_{\phi} \overset{\delta^{\mathcal{M}}_{1}}{\longrightarrow} s^{\mathcal{M}}\big]|\;is\;negligible$
\end{center}

In this case, from a computationally bounded adversary's perspective, the
representation $s^{\mathcal{M}}$ received in step 4 of the WHI game
is equally likely to have been produced by either $\delta^{\mathcal{M}}_{0}$ or $\delta^{\mathcal{M}}_{1}$.
Hence, observation of a data structure state gives the adversary
a negligible advantage in guessing $c$. The data structure implementation
therefore preserves WHI.

\medskip
\noindent
\textbf{Equivalence of SHI definitions}
\smallskip

For strong history independence Hartline \etal proposed the following definition.
\begin{definition}
\label{def:hshi}
{\em Strong History Independence (SHI)} \\
A data structure implementation is strongly history independent if, for any two (possibly empty)
sequences of operations X and Y that take a data structure in state A to state B,
the distribution over representations of B after X is performed on a
representation \textbf{\textsl{a}} is identical to the distribution after Y is
performed on \textbf{\textsl{a}}.
That is:
\end{definition}
\begin{center}
$(A \overset{X}{\longrightarrow} B) \land (A \overset{Y}{\longrightarrow} B)
\Longrightarrow
\forall$ \textbf{a} $\in A, \forall$ \textbf{b} $\in B$,
\textbf{Pr\big[a $\overset{X}{\longrightarrow}$ b\big] = Pr\big[a $\overset{Y}{\longrightarrow}$ b\big]}
\end{center}
In the above definition, $A \overset{X}{\longrightarrow} B$ denotes that a operation
sequence $X$ when applied to state $A$, results in state $B$. The notation \textbf{a} $ \in A$ means than
\textbf{a} is a memory representation of state $A$. \textbf{Pr\big[a $\overset{X}{\longrightarrow}$ b\big]}
denotes the probability that a sequence $X$ when applied to memory representation \textbf{a}, results
in representation \textbf{b}.

Similar to the case for WHI, our game-based SHI definition (Section \ref{hi:stronghi}) differs from the above definition
only by relaxing the requirement for identical distributions. That is, for SHI,
we require the following:
\begin{center}
$(A \overset{X}{\longrightarrow} B) \land (A \overset{Y}{\longrightarrow} B)
\Longrightarrow
\forall$ \textbf{a} $\in A, \forall$ \textbf{b} $\in B$,
$|$\textbf{Pr\big[a $\overset{X}{\longrightarrow}$ b\big] - Pr\big[a $\overset{Y}{\longrightarrow}$ b\big]}$|$ is negligible
\end{center}

The equivalence of SHI definitions follows similarly to the case of WHI.


\medskip
\noindent
\textbf{Summary of Differences}
\smallskip

The main differences between our definitions and the the definitions by Hartline \etal
are the following
\begin{itemize}
\item	The definitions by Hartline \etal are imprecise about the concepts of data structures,
	states, and memory representations. We precisely formalize all of these concepts.
\item	Hartline \etal do not consider the case of computationally bounded adversaries.
	We permit computationally bounded adversaries and thus have the
	negligibility definition instead of equality for memory distributions.
\end{itemize}

\subsection{Canonical Representations}
\label{hi:canonical}

\begin{table*}[t]
\vspace{0.2cm}

\caption{Identification of scenarios where canonical representations are necessary
for history independence. N/A = not applicable.\label{hitheory:table:hicanrep}}{%
\centering
\begin{tabular}{|c|c|c|c|c|}
\hline
{\bf Programs of $\mathcal{M}$} & {\bf Random bits} & {\bf Adversary}       & {\bf History}	 &	{\bf Canonical}\\
				& {\bf hidden from} & {\bf computationally} & {\bf Independence} &	{\bf representations}\\
				& {\bf adversary}   & {\bf bounded}         & {\bf desired} 	 & 	{\bf needed?}\\
\hline
Randomized & Yes & Yes & WHI & No\\
\hline
N/A & N/A & No & N/A & Yes\\
\hline
N/A & N/A & N/A & SHI & Yes\\
\hline
Deterministic& N/A & N/A & N/A & Yes\\
\hline
\end{tabular}}

\end{table*}

Canonically (or uniquely) represented data structures have the property
that each ADT state has a unique memory representation.
Unique representation implies that the ADT and data structure
state transition graphs are isomorphic\footnote{Isomorphism is discussed in Section \ref{hitheory:preliminaries:hisemi}.}.
Canonically represented data structures give very strong guarantees for history independence and in many
cases are the only way to achieve history independence. 

We first define canonically represented data structures and then 
discuss several important results pertaining to canonical
representations and history independence.

We also summarize (Table \ref{hitheory:table:hicanrep}) the scenarios where canonical
representations are necessary for history independence across all combinations
of types of programs, secrecy of random bits, adversarial computational ability,
and the desired notion of history independence.

\begin{definition}
\label{def:canrep}
{\em Canonically represented data structure} \\
A data structure $\mathcal{D}$ implementing an ADT $\mathcal{A}$ on a
bounded RAM machine model $\mathcal{M}$ is canonically represented if
each ADT state has a unique memory representation, that is,
the mapping $m : \mathcal{S} \rightarrow 2^{\mathcal{S}^\mathcal{D}}$ is
injective and $|m(s)| = 1$, where $\mathcal{S}$ is the set of all ADT states,
$\mathcal{S}^\mathcal{D}$ is the set of all data structure states,
and $m(s)$ denotes the set of memory representations
of an ADT state $s \in S$ as per definition \ref{def:mr}.
%
\end{definition}
\subsubsection{ADTs with infinite states}
\label{hi:canonical:infinite}

The case of infinite ADT states is of particular importance for canonically represented
data structure implementations on a bounded RAM machine model.
Since the bounded RAM machine model has a finite number of available
bits, the machine state space is not large enough to provide a
unique representation for each ADT state when the ADT state space
is infinite. Impossibility of unique representations clearly suggests that
canonical representations for infinite state set ADTs are not possible
in practice since machines with infinite state space do not exists in reality. 
This straight-forwardly leads to the following theorem.
\begin{thm}
\label{hi:thm:canonimpossible}
Canonically represented data structure implementations for ADTs with infinite states
are impossible in practice.
\end{thm}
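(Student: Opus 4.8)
The plan is to argue by a straightforward cardinality/pigeonhole count, leveraging the finiteness of the bounded RAM state space noted in Section~\ref{hitheory:preliminaries:memrep:infinitestatespace}. First I would recall that for a bounded RAM machine model $\mathcal{M}$ with $m$ $b$-bit memory words and $n$ $b$-bit registers, Definition~\ref{def:rammodel} fixes $\mathcal{S}^{\mathcal{M}} = \{0,1\}^{b(m+n)}$, so $|\mathcal{S}^{\mathcal{M}}| = 2^{b(m+n)} < \infty$. Since the set $\mathcal{S}^{\mathcal{D}}$ of data structure states is, by definition, the set of machine states reachable from $s_{0}^{\mathcal{M}}$ via programs implementing ADT operations, we have $|\mathcal{S}^{\mathcal{D}}| \le |\mathcal{S}^{\mathcal{M}}| < \infty$.

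Next I would unwind Definition~\ref{def:canrep}: a canonically represented data structure requires the memory-representation map $m : \mathcal{S} \rightarrow 2^{\mathcal{S}^{\mathcal{D}}}$ to be injective with $|m(s)| = 1$ for every ADT state $s \in \mathcal{S}$. Selecting the unique element of each $m(s)$ thus yields an injection $\hat{m} : \mathcal{S} \hookrightarrow \mathcal{S}^{\mathcal{D}}$. If $\mathcal{A}$ has infinitely many states, i.e.\ $|\mathcal{S}| = \infty$, then $\hat{m}$ is an injection from an infinite set into the finite set $\mathcal{S}^{\mathcal{D}}$, which is impossible by the pigeonhole principle. Hence no canonically represented implementation of $\mathcal{A}$ on any bounded RAM machine model exists.

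To conclude ``in practice,'' I would observe that every physically realizable computer has a bounded amount of storage and is therefore modeled by some bounded RAM machine model $\mathcal{M}$ (as argued at the start of Section~\ref{hitheory:preliminaries:rammodel}); the argument above applies verbatim to each such $\mathcal{M}$, so no real machine admits a canonical implementation of an infinite-state ADT.

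I do not expect a genuine mathematical obstacle here — the content is essentially a one-line counting argument. The only points that need a little care are (a) the implicit use of the fact that $|m(s)| = 1$ forces $m(s) \neq \varnothing$, i.e.\ every ADT state must actually be reachable and hence realized by some machine state, which is precisely what makes $\hat{m}$ total and lets the pigeonhole bite; and (b) the justification that the bounded RAM model, rather than an idealized infinite-memory RAM, is the appropriate model of ``practice,'' which is a modeling remark rather than a formal step.
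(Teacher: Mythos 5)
Your proposal is correct and takes essentially the same route as the paper, which also argues that the finitely many states of a bounded RAM machine cannot uniquely represent infinitely many ADT states and that real machines are necessarily bounded. You merely make the counting argument more explicit (via the injectivity of the representation map and the pigeonhole principle) than the paper's informal exposition preceding the theorem.
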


However, prior work \cite{naorantipersistence,naorcuckoo,golovinthesis} has claimed
designs for canonically represented data structures for the RAM model in
direct contradiction to Theorem \ref{hi:thm:canonimpossible}. 
The contradiction arises from the fact that prior work has
implicitly considered ADTs with finite state space. Specifically,
the ADTs considered have have fewer states than the 
the total number of machine states.



\subsubsection{Cannonical Representations and SHI}
\label{hitheory:canonical}
Since history independence was first proposed \cite{naorantipersistence},
it has been known that canonically represented data structures
support SHI. An interesting question posed in this context was whether
canonical representations are necessary to achieve SHI.
The question about the necessity of canonical representations for SHI was answered
by Hartline \etal
Hartline \etal \cite{hartlinecharacterizinghi} showed that SHI
cannot be achieved without canonical representations.

\begin{thm}
A data structure is strongly history independent iff it is
canonically represented.
\end{thm}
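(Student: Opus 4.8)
## Proof Proposal

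The plan is to prove both directions of the iff, leveraging the game-based SHI definition from Section~\ref{hi:stronghi} and the semi-formal analysis of non-isomorphism from Section~\ref{hitheory:preliminaries:hisemi}. The easy direction is ``canonically represented $\Rightarrow$ SHI'': if the mapping $m$ is injective with $|m(s)|=1$, then each ADT state $s_1, s_2$ has a unique memory representation. In the SHI game, the adversary must supply a memory representation $s_1^{\mathcal{M}}$ of $s_1$, but there is only one such representation; likewise there is a unique $s_2^{\mathcal{M}}$. Hence no matter which of $\delta_0, \delta_1$ the challenger picks, the challenger returns the same $s_2^{\mathcal{M}}$ (both sequences take the ADT from $s_1$ to $s_2$ by hypothesis, and the corresponding program sequences must land on the unique memory representation of $s_2$). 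The adversary's view is therefore independent of the coin $c$, so $\Pr[c'=c] = 1/2$ exactly and the advantage is zero, hence negligible.

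The harder direction is ``SHI $\Rightarrow$ canonically represented.'' I would argue the contrapositive: suppose $\mathcal{D}$ is not canonically represented, so some ADT state $s$ has (at least) two distinct memory representations $a, b \in m(s)$. By Definition~\ref{def:mr}, there exist ADT operation sequences $\delta_a, \delta_b$ (with input sequences $I_a, I_b$) taking $s_\phi$ to $s$ such that the corresponding program sequences produce $a$ and $b$ respectively. The key structural observation — this is the main obstacle — is that I need to manufacture a \emph{single} SHI-game instance (fixed $s_1, s_2$, fixed starting representation $s_1^{\mathcal{M}}$, and two sequences $\delta_0, \delta_1$ from $s_1$ to $s_2$) on which the adversary wins non-negligibly. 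The natural move is to take $s_1 = s_2 = s$ and pick the starting representation to be $a$; I then need two sequences from $s$ back to $s$ whose program executions, started at $a$, land on \emph{distinguishable} representations with non-negligible probability. If the ADT has a nontrivial loop at $s$ (e.g., insert-then-delete of some element), one can consider the identity-effect sequence versus such a loop; alternatively, one builds sequences that route through the two different representations $a$ and $b$. The subtlety is ensuring such sequences exist in the ADT and that the resulting data-structure representations genuinely differ — this is where the argument must either invoke an auxiliary structural lemma about the state-transition graphs or explicitly appeal to the non-isomorphism picture of Section~\ref{hitheory:preliminaries:hisemi}, where distinct representations of the same ADT state sit on distinct vertices with distinct incoming paths.

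Once the two distinguishable outcomes are in hand, the adversarial strategy mirrors Case~1 of the WHI equivalence argument in Section~\ref{hitheory:equivalence}: on receiving $s_2^{\mathcal{M}}$ in step~4, the adversary outputs the $c$ for which $\delta_c^{\mathcal{M}}$ has the higher probability of producing that representation; since the two induced distributions over representations of $s_2$ differ non-negligibly on some point, $|\Pr[c'=c] - 1/2|$ is non-negligible, contradicting SHI. I would also remark, as the paper does in the SHI game discussion, that taking $s_1 = s_\phi$ recovers the WHI setting, so the same construction shows that failure of canonical representation can already break the weaker notion unless randomization is used; for SHI, even randomization cannot rescue it because the adversary controls the starting representation $s_1^{\mathcal{M}}$ and can thereby pin down one endpoint, defeating the averaging that randomization relies on. The cleanest exposition is probably: (i) state and dispatch the forward direction in two lines; (ii) isolate the combinatorial claim that non-canonicity yields a same-endpoints SHI instance with distinguishable outcomes as the crux; (iii) finish with the standard ``guess the more-likely coin'' distinguisher.
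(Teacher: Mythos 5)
Your proposal takes essentially the same route as the paper's proof: argue the contrapositive by placing the adversary at an ADT state with two distinct memory representations, pit an empty operation sequence against a nonempty loop returning to that state, and finish with the standard ``output the coin whose program sequence more likely produced the observed representation'' distinguisher. You additionally spell out the easy forward direction (canonical $\Rightarrow$ SHI), which the paper omits, and you honestly flag the one genuine crux---why a nonempty loop started at representation $a$ reaches a different representation with non-negligible probability---which the paper's own proof also asserts without argument.
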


\begin{proof}
The proof by Hartline et al. \cite{hartlinecharacterizinghi} builds on the case that if a data structure is not
canonically represented, then an adversary can distinguish an empty sequence of operations
from a nonempty sequence of operations. In the context of our game based definition for SHI, 
we provide an equivalent proof for the same.

Consider an ADT $\mathcal{A}$ and a data structure $\mathcal{D}$ implementing $\mathcal{A}$ on a
bounded RAM machine model $\mathcal{M}$. Also assume that $\mathcal{D}$ is not canonically represented.
Now, Let $Q$ be an adversery and $C$ be the challenger in our game. $Q$ selects the following 
\begin{itemize}
\item Two ADT state $s_1$ and $s_2$.
\item Two sequence of operations $\delta_0$ and $\delta_1$; and two sequences of ADT inputs $I_0$ and $I_1$,
such that $\mathbb{O}(\delta_0,s_1,I_0) \rightarrow (s_2,\tau)$ and $\mathbb{O}(\delta_1,s_1,I_1) \rightarrow (s_2,\tau)$.
\end{itemize}

Let $\alpha_{1}$ and $\alpha_{2}$ be two distinct memory representations for ADT state $s_1$. 
We show that, with this setup, the adverary $Q$ can distinguish between an empty sequence of operations and 
a non empty sequence of operations. Consider that $Q$ selects $\delta_0$ to be an empty sequence of operation and $\delta_1$ 
to be a non empty sequence of operations. 
In step 2 of the SHI game, the adversary sends $s_1$,$\delta_0$,$\delta_1$,$I_0$ and $I_1$ to $C$.
In step 3 of the SHI game, $C$ flips a coin and applies either $\delta_0$ or $\delta_1$ to $s_1$ and 
returns the memory representation of the output state to $Q$.
There are two possible cases for step 3 -
\begin{enumerate}
\item $C$ selects $\delta_0$ -- there is no change in the ADT state and the corresponding memory
representation since an empty sequence of operations does not cause state changes.
Hence in step 4, $C$ returns $\alpha_1$ to $Q$.
\item $C$ selects $\delta_1$ -- the final ADT state reached after performing all the operations in $\delta_1$ is $s_1$ but the memory representation for $s_1$ 
in this case may be either $\alpha_1$ or $\alpha_2$. In step 4, if $C$ returns $\alpha_2$ to $Q$, then $Q$ can correctly predict with 
non-negligible probability that $C$ has applied $\delta_1$ on $s_1$ to reach $\alpha_2$.  This breaks strong history independence 
for $\mathcal{D}$.
\end{enumerate}
\end{proof}


\noindent
\textbf{Canonical representations are not necessary for WHI}\\
\noindent
In the absence of canonical representations, it has been shown that
an adversary can distinguish an empty sequence of operations from
a nonempty sequence of operations thereby breaking SHI \cite{hartlinecharacterizinghi}.
If operation sequences are always assumed to be nonempty, canonical
representations are not necessary \cite{hartlinecharacterizinghi}.
We will define such a slightly relaxed notion of history independence
that permits only nonempty sequences in Section \ref{hi:deltahi}.
We show that WHI is preserved even for empty operation sequences
in the absence of canonical representations.

Consider the WHI game from Section \ref{hi:weakhi}.
The case in which the adversary selects two empty ADT operation sequences
in step 1 is trivial since empty sequences cause no state transitions and
hence there is no history to be revealed.

Now, consider the case when the adversary selects an
empty sequence $\delta_{\phi}$ and a nonempty sequence $\delta_{1}$ of
ADT operations. Both $\delta_{\phi}$ and $\delta_{1}$ are required
to take the ADT from the initial state to the same end state.
Since the empty sequence $\delta_{\phi}$ causes no state transitions, 
end state for both sequences $\delta_{\phi}$ and $\delta_{1}$  will be the initial
ADT state.

Then, in step 3, the challenger chooses either $\delta_{\phi}$ or $\delta_{1}$ 
and sends the resulting memory representation to the adversary.
Since the end state for the two operation sequences is the initial ADT state,
the memory representation sent to the adversary in step 4 will be
the data structure initialization state. From the data structure definition
(Section \ref{hitheory:preliminaries:ds}), we know that the initial ADT state
has a corresponding fixed unique memory representation. Hence, irrespective of the
nonempty sequence that the adversary selects in step 1, the adversary receives
the initial ADT state's memory representation in step 4.
Since the adversary receives the same representation each time, the
adversary gains no advantage in guessing
whether $\delta_{\phi}$ or $\delta_{1}$ was chosen by the challenger in step 3.

ADT states other than the initial ADT state can have multiple memory representations.
Multiple representations for ADT states does not break WHI as long it is ensured that 
from the adversary's perspective, all representations of the current ADT state
are equally likely to be observed. Randomization achieves equal 
likelyhood for all representations of an ADT state (Section \ref{hi:randomization}).
In Section \ref{hitheory:preliminaries:hisemi} we covered the use of randomization for WHI from
the perspective of state transition graphs. 
Later, in Section \ref{hi:randomization} we will show how to achieve WHI by randomization in practice.
, which can be used for WHI.
For deterministic machine programs (also
described in Section \ref{hi:randomization}) WHI too requires canonical representations.

\subsubsection{Canonical representations and adversary models}
Canonically represented data structures are history independent in the strongest
sense, secure even against a computationally unbounded adversary \cite{golovinthesis}.
For a computationally unbounded adversary, canonical representations are also necessary
for WHI.

\subsection{Randomization and HI}
\label{hi:randomization}

In Section \ref{hitheory:preliminaries:hisemi}, we introduced the use of 
randomization for WHI from the point of view
of state transition graphs. 

In practice, randomization is achieved using the machine
programs implementing the ADT operations. 
An ADT operation $o$ takes the ADT from a state $s_{1}$ to a state $s_{2}$.
A machine program implementing $o$ takes the data
structure from a memory representation of state $s_{1}$ to a memory
representation of state $s_{2}$.
Since each ADT state can have several memory representations (Section \ref{hitheory:preliminaries:hisemi}),
the program has a choice amongst all representations of state $s_{2}$
and picks one representation as the result of a transition.
Starting from a fixed memory representation of $s_{1}$, and a fixed input, if the
program takes the data structure to a fixed
resulting representation of $s_{2}$ on each execution, then the program is said to be deterministic.
If on each execution the resulting representation is chosen uniformly at random 
from all possible representations of state $s_{2}$,
then the program is said to be randomized.

To illustrate, consider an ADT operation $o$ and a machine
program $p$ implementing $o$. Let $o(s_{1}, i) \rightarrow (s_{2}, \tau)$ 
denote the transition from ADT state $s_{1}$ to ADT state $s_{2}$
using an ADT input $i$ and producing an ADT output $\tau$.
Also,
let $m(s_{1})$ and $m(s_{2})$ denote the set of memory representations of states
$s_{1}$ and $s_{2}$, respectively.
Then, for history independence, the following must hold for program $p$:
\begin{center}
$Pr[p(s^{\mathcal{M}}_{1}, \alpha(i)) \rightarrow (s^{\mathcal{M}}_{2}, \beta(\tau))] = \frac{1}{|m(s_{2})|}$,
$\forall$ $s^{\mathcal{M}}_{1} \in m(s_{1})$ and $\forall$ $s^{\mathcal{M}}_{2} \in m(s_{2})$.
\end{center}
Here,
$\alpha(i)$ and $\beta(\tau)$ are
the machine representations of ADT input $i$ and ADT output $\tau$, respectively.

Randomization here refers to the selection of memory representations
for ADT states and not to program outputs. A program's output is the machine
representation of the corresponding ADT operation's output.

If randomization is used for history independence, then
random choices made by the machine programs must be hidden from
the adversary. If the adversary has knowledge of the random bits, then
from the adversary's point of view the machine programs are deterministic.
Data structures with deterministic machine programs require canonical representations.

\section{Generalizing History Independence}
\label{hitheory:hi:generalizinghi}

SHI is a very strong notion of history independence
requiring canonical representations \cite{golovinthesis,hartlinecharacterizinghi}.
Canonically represented data structures are not efficient \cite{buchbinderbounds}.
For heap and queue data structures Buchbinder \etal \cite{buchbinderbounds} show that 
certain operations that require logarithmic time under WHI take linear
time under SHI.
Hence, it is worth to question the need for canonical representations
for history independence.
Many scenarios may not require such a strong notion 
rendering SHI data structures with canonical representations
inefficient.

Some scenarios that can be efficiently realized
by new history independence notions:
\begin{itemize}
\item	Hiding evidence of specific operations only. For example, hiding only the fact
	that a specific data item has been deleted in the past, as required by regulations \cite{cfr240}
\item	A most recently used (MRU) caching or a journaling system by definition reveals the last $k$ operations.
	Hence, journaling and caching require a new notion of history independence, wherein
	no history is revealed other than the last $k$ operations \cite{naorantipersistence}.
\item	Revealing only the number of times each operation is performed \cite{golovinthesis}.
	For example, in a file-sharing application disclosing file-access counts may be permissible,
	but not the access order.
\end{itemize}


A straight-forward way to define new notions of history independence is to
provide a new game-based definition for each scenario. However,  
defining distinct scenario-specific games can quickly become a
tedious process.
Instead, we introduce a  definitional framework
that can accommodate a broad spectrum of history independence notions.
We term the new framework as $\Delta$ history independence ($\Delta$HI),
where $\Delta$ is the parameter determining the history independence
flavor. As we shall see, $\Delta$HI also captures both WHI and SHI.
In addition, $\Delta$HI helps to reason about the history revealed or
concealed by existing data structures which were designed without
history independence in mind.
\subsection{$\Delta$ History Independence ($\Delta$HI)}
\label{hi:deltahi}

The WHI and SHI games (Sections \ref{hi:weakhi} and \ref{hi:stronghi}
respectively) are defined over a subset of ADT operation sequences.
For WHI, the adversary is permitted to select sequences
that take the ADT from initialization to the same end state.
For SHI, the permitted sequences are ones that take the ADT from
the same starting state to the same ending state. The selection
is made by the adversary in step 1 of both the WHI and SHI games.
Hence, the initial selection permitted to the adversary determines the
history that is desired to be revealed or hidden.
By generalizing the selection step, we can accommodate a broad spectrum
of history independence notions. 
We achieve the generalization in $\Delta$HI, defined by the following game:
\begin{framed}
\begin{footnotesize}
Let $\mathcal{A} = (\mathcal{S}, s_{\phi}, \mathcal{O}, \Gamma, \Psi)$ be an ADT, $\mathcal{M} = (\mathcal{S}^{\mathcal{M}}, s_{\phi}^{\mathcal{M}}, \mathcal{P}^{\mathcal{M}}, \Gamma^{\mathcal{M}}, \Psi^{\mathcal{M}})$ be a bounded RAM machine model, 
and $\mathcal{D} = (\alpha, \beta, \gamma, s_{0}^{\mathcal{M}})$ be a data structure implementing $\mathcal{A}$ in $\mathcal{M}$, as per definitions \ref{def:adt}, \ref{def:rammodel} and \ref{def:ds}, respectively.
Also, let $\zeta$ be the set of all ADT operation sequences, $\Upsilon$ be the set of all ADT input sequences, and
$\Delta$ be a function $\Delta: \mathcal{S} \times \mathcal{S} \times \zeta \times \zeta \times \Upsilon \times \Upsilon \rightarrow \{0, 1\}$.

\begin{enumerate}
\item	A probabilistic polynomial time-bounded adversary selects the following.
	\begin{itemize}
	\item	Two ADT states $s_{1}$ and $s_{2}$;
		two sequences of ADT operations $\delta_{0}$ and $\delta_{1}$; and two sequences of ADT inputs $I_{0}$ and $I_{1}$;
		such that
				$\Delta(s_{1}, s_{2}, \delta_{0}, \delta_{1}, I_{0}, I_{1}) = 1$.
	\item 	A memory representation $s_{1}^{\mathcal{M}}$ of ADT state $s_{1}$.
	\end{itemize}
\item	The adversary sends $s_{1}$, $s_{1}^{\mathcal{M}}$, $\delta_{0}$, $\delta_{1}$, $I_{0}$ and $I_{1}$ to the challenger.
\item	The challenger flips a fair coin $c \in \{0,1\}$ and computes
	$\mathbb{O}^{\mathcal{M}}(\delta_{c}^{\mathcal{M}}, s^{\mathcal{M}}_{1}, I_{c}) \rightarrow (s^{\mathcal{M}}, \tau^{\mathcal{M}})$,
	where $\delta_{c}^{\mathcal{M}} = \chi(\delta_{c})$.
	That is, the challenger applies the program sequence $\delta_{c}^{\mathcal{M}}$
	corresponding to the ADT operation sequence $\delta_{c}$ to the data
	structure state $s^{\mathcal{M}}_{1}$, resulting in a memory representation
	$s^{\mathcal{M}}$, 
	and a machine output $\tau^{\mathcal{M}}$.
\item	The challenger sends the memory representation $s^{\mathcal{M}}$ and the machine output $\tau^{\mathcal{M}}$ to the adversary.
\item	The adversary outputs $c' \in \{0,1\}$.
\end{enumerate}

\begin{center}
The adversary wins the game if $c' = c$.
\end{center}

$\mathcal{D}$ is said to be $\delta$ history independent if the advantage of the adversary for winning the game, 
defined as $|Pr[c^{'}=c]-1/2|$ is negligible
(where ``negligible'' is defined over any implementation-specific security parameters of the programs in $\mathcal{P^M}$).
\end{footnotesize}
\end{framed}


Function $\Delta$ determines the pairs of ADT states, ADT operation sequences, and
ADT input sequences that the adversary is permitted to select in step 1
of the $\Delta$HI game. For the adversary-selected ADT states, operation sequences, and input sequences,
the $\Delta$HI game can be played and the data structure implementation is required to ensure that
the advantage of the adversary is negligible.
Thus, for a given ADT, $\Delta$ defines two sets,
\begin{center}
$H_{\Delta} = \{ (s_{1}, s_{2}, \delta_{0}, \delta_{1}, I_{0}, I_{1})\;|\; \Delta(s_{1}, s_{2}, \delta_{1}, \delta_{2}, I_{0}, I_{1}) = 1\}$, and
$\overline{H}_{\Delta} = \{ (s_{1}, s_{2}, \delta_{0}, \delta_{1}, I_{0}, I_{1})\;|\; \Delta(s_{1}, s_{2}, \delta_{1}, \delta_{2}, I_{0}, I_{1}) = 0\}$.
\end{center}
For all tuples in $H_{\Delta}$, history independence is preserved, that is, neither the ADT nor the
data structure implementation reveals the operation sequence 
selected by the challenger in step 3.
For all tuples in $\overline{H}_{\Delta}$, history independence is not required to be preserved
since the ADT itself reveals the sequence of operations used.

A careful choice of $\Delta$ allows us to precisely define both SHI and WHI,
and a broad spectrum of new history independence notions.
%
In the following, we illustrate the use of $\Delta$HI framework
to define some familiar history independence notions and a few
previously unconsidered notions of history independence.
\subsubsection{Strong History Independence (SHI)}
\label{hi:deltahi:shi}

We discussed SHI in Section \ref{hi:stronghi}.
Here, we define the function $\Delta$ for SHI.
\begin{small}
\begin{center}
$\Delta(s_{1}, s_{2}, \delta_{0}, \delta_{1}, I_{0}, I_{1}) =
	\left\{ \begin{array}{cl}
			1 & \mbox{if}\; \mathbb{O}(\delta_{0}, s_{1}, I_{0}) \rightarrow (s_{2}, \tau)\;and\;
			  		\mathbb{O}(\delta_{1}, s_{1}, I_{1}) \rightarrow (s_{2}, \tau)\\
			0 & \mbox{otherwise}
\end{array}\right.$
\end{center}
\end{small}

For SHI, the adversary's advantage in the $\Delta$HI game must be negligible
when in step 1, the adversary selects any two ADT operation sequences that take
the ADT from a state $s_{1}$ to a state $s_{2}$ producing the same ADT output $\tau$.
\subsubsection{Weak History Independence (WHI)}
\label{hi:deltahi:whi}

Refer to Section \ref{hi:stronghi} for discussion on WHI, which requires the following
definition of $\Delta$.
\begin{small}
\begin{center}
$\Delta(s_{1}, s_{2}, \delta_{0}, \delta_{1}, I_{0}, I_{1}) =
	\left\{ \begin{array}{cl}
			1 & \mbox{if}\; s_{1} = s_{\phi}\;and\;
			  		\mathbb{O}(\delta_{0}, s_{1}, I_{0}) \rightarrow (s_{2}, \tau)\;and\;\\
			  &  		\;\;\;\mathbb{O}(\delta_{1}, s_{1}, I_{1}) \rightarrow (s_{2}, \tau)\\
			0 & \mbox{otherwise}
\end{array}\right.$
\end{center}
\end{small}

Since WHI permits the adversary to observe a single data structure state,
the adversary chooses only the end
state $s_{2}$ in step 1 of the $\Delta$HI game.
The starting state on which sequences $\delta_{0}$
and $\delta_{1}$ are applied is the initial ADT state $s_{\phi}$.

\subsubsection{Null history independence ($\phi$HI)}
\label{hi:deltahi:appendonlylog}

Under null history independence, a data structure
conceals no history except for the trivial case when the ADT operation
sequences and ADT input sequences selected by the adversary in the
$\Delta$HI game are identical.
Example of a data structure with $\phi$HI is an append-only log.
We can reflect $\phi$HI using the following.

\begin{small}
\begin{center}
$\Delta(s_{1}, s_{2}, \delta_{0}, \delta_{1}, I_{0}, I_{1}) =
	\left\{ \begin{array}{cl}
			1 & \mbox{if}\; \mathbb{O}(\delta_{0}, s_{1}, I_{0}) \rightarrow (s_{2}, \tau)\;and\; \\
			  & 		\mathbb{O}(\delta_{1}, s_{1}, I_{1}) \rightarrow (s_{2}, \tau)\\
			  &		\;\;\;and\;\delta_{1} = \delta_{2}\;and\;I_{1} = I_{2}\\
			0 & \mbox{otherwise}
\end{array}\right.$
\end{center}
\end{small}


\subsubsection{SHI*}
\label{hi:deltahi:shi2}

The necessity of canonical representations for SHI was
proven by Hartline \etal \cite{hartlinecharacterizinghi}.
The proof by Hartline \etal \cite{hartlinecharacterizinghi} builds on the case
that if a data structure is not canonically represented, then an
adversary can distinguish an empty sequence of operations from a
nonempty sequence.
Hartline \etal \cite{hartlinecharacterizinghi} then proposed SHI*, which
is defined over nonempty ADT operation sequences.
SHI* data structures were initially expected to more efficient
than data structures providing SHI.
However, Hartline \etal \cite{hartlinecharacterizinghi} found that SHI* still
poses very strict requirements on a data structure and may not
differ from SHI in asymptotic complexity. 

Here, we give the $\Delta$ function for SHI*.

\begin{small}
\begin{center}
$\Delta(s_{1}, s_{2}, \delta_{0}, \delta_{1}, I_{0}, I_{1}) =
	\left\{ \begin{array}{cl}
			1 & \mbox{if}\; \mathbb{O}(\delta_{0}, s_{1}, I_{0}) \rightarrow (s_{2}, \tau)\;and\;\\
			  & 		\mathbb{O}(\delta_{1}, s_{1}, I_{1}) \rightarrow (s_{2}, \tau)\;and\;\\
			  & 		\;\;\;|\delta_{0}| > 0\;and\;|\delta_{1}| > 0\\
			0 & \mbox{otherwise}
\end{array}\right.$
\end{center}
\end{small}

SHI* closely resembles SHI except that the operations sequences $\delta_{0}$
and $\delta_{1}$ must be nonempty.

\subsubsection{Reveal last k operations (Most Recently Used Cache, Journal)}
\label{hi:deltahi:lastk}

System features such as caching and journaling by
definition reveal the last k operations performed from the
ADT state itself. Thus, for caching and journaling, we need to define a
$\Delta$ function, such that no additional historical
information is leaked from the memory representations
other than the last k operations. We define the new notion as follows. 

Let $\delta[i]$ denote the $i^{th}$ operation in the sequence $\delta$.
Also, let $\delta[i, j]$ denote a subsequence of $\delta$, \{$\delta[i], \ldots, \delta[j]$\} with $i \leq j$.
\begin{small}
\begin{center}
$\Delta(s_{1}, s_{2}, \delta_{0}, \delta_{1}, I_{0}, I_{1}) =
	\left\{ \begin{array}{cl}
			1 & \mbox{if}\; \mathbb{O}(\delta_{0}, s_{1}, I_{0}) \rightarrow (s_{2}, \tau)\;and\;\\
			  &		\mathbb{O}(\delta_{1}, s_{1}, I_{1}) \rightarrow (s_{2}, \tau)\;and\;\\
			  & 		\;\;\;|\delta_{0}| \geq k\;and\;|\delta_{1}| \geq k\;and\;\\
			  &		\delta_{0}[|\delta_{0}|-k,|\delta_{0}|] = \delta_{1}[|\delta_{1}|-k,|\delta_{1}|]\\
			0 & \mbox{otherwise}
\end{array}\right.$
\end{center}
\end{small}

The adversary is permitted to choose
two sequences $\delta_{0}$ and $\delta_{1}$, such the that last $k$ operations
in $\delta_{0}$ and $\delta_{1}$ are the same. Other than the last
$k$ operations, sequences $\delta_{0}$ and $\delta_{1}$ may differ.
Yet, the adversary should be unable to identify the sequence chosen
in step 3.
\medskip
\noindent

\subsubsection{Operation-Agnostic History Independence (OAHI)}
\label{hi:deltahi:oahi}
Consider a secure deletion application that wishes to
destroy any evidence of a delete operation performed
in the past. That is, an adversary
(by observing the memory representations)
should be unable to detect whether a delete operation was
performed or not other than guessing.
In general, any particular operation may require to be concealed, not just
deletes.
can be extended to any ADT operation (not just delete)
We introduce a new notion of history independence that conceals
specific ADT operations. The new notion is referred to as 
operation-agnostic history independence (OAHI).  
A data structure that is $\Delta$ history independent
given the following $\Delta$ function
guarantees operation-agnostic history independence
for an ADT operation $o$.

\begin{center}
\begin{small}
$\Delta(s_{1}, s_{2}, \delta_{0}, \delta_{1}, I_{0}, I_{1}) =
	\left\{ \begin{array}{cl}
			1 & \mbox{if}\; \mathbb{O}(\delta_{0}, s_{1}, I_{0}) \rightarrow (s_{2}, \tau)\;and\;
			  		\mathbb{O}(\delta_{1}, s_{1}, I_{1}) \rightarrow (s_{2}, \tau)\;and\;\\
			  &		\;\;\;o \in \delta_{0}\;and\;o \notin \delta_{1}\\
			0 & \mbox{otherwise}
\end{array}\right.$
\end{small}
\end{center}

In OAHI, neither the presence of operation $o$ in $\delta_{0}$, nor the absence
of $o$ in $\delta_{1}$ gives the adversary any advantage in guessing
the sequence chosen by the challenger in step 3 of the $\Delta$HI game.

\subsection{Measuring History Independence}
\label{hi:deltahi:measuring}

We have seen that new notions of history independence can be easily
derived from $\Delta$ history independence by defining the
appropriate $\Delta$ function. In this section, we present an  
intuitive way of comparing $\Delta$ functions on
the basis of the history they require to be concealed or preserved.

For a given $\Delta$ function we defined the set $H_{\Delta}$ (Section \ref{hi:deltahi})
that represents all combinations
of ADT states, operation sequences, and ADT input sequences for
which the adversary's advantage is negligible in the $\Delta$ history
independence game. That is, for all members of $H_{\Delta}$,
history independence is preserved. One insight is
to use the cardinality of $H_{\Delta}$ as a measure of history independence.

Recall from Section \ref{hitheory:preliminaries:ds} that an ADT
can have several data structure implementations.
Let $\mathcal{D}$ and $\mathcal{D'}$ be two implementations of
an ADT $\mathcal{A}$, such that $\mathcal{D}$ is $\Delta$ history
independent and $\mathcal{D'}$ is $\Delta'$ history
independent for two functions $\Delta$ and $\Delta'$.
Now, we say that $\mathcal{D}$ is more history independent
than $\mathcal{D'}$ if $H_{\Delta'} \subset H_{\Delta}$.

Note that $|H_{\Delta}| > |H_{\Delta'}|$ alone does
not imply that $\mathcal{D}$ is more history independent
than $\mathcal{D'}$ since an application may be
more sensitive to the history preserved by $D'$ than the history
preserved by $D$.
Only in the case where $H_{\Delta'} \subset H_{\Delta}$ can
we consider $\mathcal{D}$ to be a more history independent implementation
than $\mathcal{D'}$.

\subsection{Deriving History Independence}
\label{hi:deltahi:deriving}

In order to provide a history independent implementation
for an ADT, we first require the $\Delta$ function to
be precisely defined. Then, a history independent data structure
can be designed that satisfies the $\Delta$ function.
Satisfying a  $\Delta$ function means that the adversary's advantage 
is always negligible in the $\Delta$ history independence game.
In effect, so far we have approached history independence as a 
define-then-design process.

However, data structures have been in use for a long time and
most data structures have been designed for efficiency or functionality
with no history independence in mind. A natural question
then arises -- are there any meaningful\footnote{$\Delta = 0$
is satisfied by all data structures. Hence, we need
to determine $\Delta$ functions that are more useful in practice.}
$\Delta$ functions satisfied by existing data structures?.


A data structure can be $\Delta$ history independent for several
$\Delta$ functions. For example, a data structure that satisfies SHI,
also satisfies WHI, OAHI, and OIAHI.
Hence, for a given data structure $\mathcal{D}$ finding a $\Delta$ function
may not be a particularly difficult task. It may be more useful instead
to determine an uncontained $\Delta$ function for $\mathcal{D}$.
We define an uncontained $\Delta$ function for a data structure
as follows.

\begin{definition}
\label{def:uncontaineddelta}
{\em Uncontained $\Delta$ function} \\
A $\Delta$ function for a data structure $\mathcal{D}$ is uncontained if
$\mathcal{D}$ is $\Delta$ history
independent and $\nexists\;\Delta'$, such that 
$\mathcal{D}$ is also $\Delta'$ history independent and $H_{\Delta} \subset H_{\Delta'}$,
where 

$H_{\Delta} = \{ (s_{1}, s_{2}, \delta_{0}, \delta_{1}, I_{0}, I_{1})\;|\\$ $\; \Delta(s_{1}, s_{2}, \delta_{0}, \delta_{1}, I_{0}, I_{1}) = 1\}$;\\
$H_{\Delta'} = \{ (s_{1}', s_{2}', \delta_{0}', \delta_{1}', I_{0}', I_{1}')\;|\; \Delta'(s_{1}', s_{2}', \delta_{0}', \delta_{1}', I_{0}', I_{1}') = 1\}$;
$s_{1}$, $s_{2}$, $s_{1}'$, and $s_{2}'$ are ADT states; $\delta_{0}$, $\delta_{1}$, $\delta_{0}'$, and $\delta_{1}'$ are ADT operation sequences; and $I_{0}$, $I_{1}$, $I_{0}'$, and $I_{1}'$ are ADT input sequences.

\end{definition}

We can determine an uncontained $\Delta$ function for existing data structures
on a case-by-case basis. An open question is whether there exists
a general mechanism for deriving an uncontained $\Delta$ function for a
given data structure.

\section{From Theory To Practice}
\label{hitheory:theory2practice}

\subsection{Defining Machine States}
\label{hitheory:theory2practice:components}

The RAM model of execution described in Section
\ref{hitheory:preliminaries:execmodels} consists of two
components, the RAM and the CPU. Hence, the machine
state for the RAM model includes bits from both
the RAM and the CPU. In general,
the machine state for a system-wide machine model
will comprise all system component states. 
A system-wide history independent implementation has to then consider
each individual component's characteristics along the interaction
between the components.
Providing system-wide history independence is therefore challenging.

However, in practice an adversary may have access to only a subset
of system components. In this case,
for the purpose of history independence, the machine
state can be defined over the adversary-accessible
components only. For example, history independent data structures 
proposed in existing work (Section \ref{hitheory:related})
are designed with the RAM model in mind. However, the machine states
considered for history independence only include
bits from the RAM and exclude the CPU.
\subsection{Building History Independent Systems}
\label{hitheory:theory2practice:buildinghisystems}

Various techniques for designing history independent data
structures for commonly used ADTs
such as queues, stacks, and hash tables have been proposed \cite{golovinthesis}.
Our focus on the other hand is designing \emph{systems} with
end-to-end history independent characteristics.
The difference between history independent implementations
for simple ADTs, such as stacks and queues versus a complete system,
such as a database, or a file system is a matter
of often exponentially increasing complexity. Fundamentally,
any system can be modeled as an ADT and an history independent implementation
can be sought for the system.

We introduce a general recipe for building history independent systems as follows:

\begin{enumerate}
\item	Model the system as an ADT. For a specific example of file
	system as an ADT, refer to section \ref{hi:hifs}.
\item	Select a machine model for implementation. While defining the
	machine state identify all machine components that the adversary
	has access to and define the machine state associated with the
	adversary-accessible components.
\item	Depending on the application scenario, fix a desired notion
	of history independence and the corresponding $\Delta$ function.
\item	Based on the definition of $\Delta$, provide an implementation
	over the selected machine model. For complex systems, the implementation
	will likely require the most effort since the machine programs
	implementing the ADT operations must provably ensure that
	the advantage of the adversary is negligible in the $\Delta$HI game.
\end{enumerate}

In section \ref{hi:hifs}, we follow the above recipe
to design a history independent file system.

\section{On A Philosophical Note}
\label{hitheory:philosophy}

At a very high level, the motivation for history independence can
be stated as follows.
\begin{center}
\emph{For any logical state $S_{L}$, the physical
state $S_{P}$ representing $S_{L}$ may reveal information 
about the history leading to $S_{L}$, that is otherwise not discernible
via solely $S_{L}$.}
\end{center}

So far, we have considered
the logical state to be the ADT state and the physical state to be the
underlying machine state representing the ADT state, that is, the physical
state is the set of all bits of the machine. Our selection
of logical and physical states seems rather arbitrary. We do this
specific selection due to our adversary model,
which assumes that the adversary can interpret information
at the level of bits. An adversary, that can for example, examine
the electric charge in individual capacitors used to represent
the bits will require a different choice of logical and physical state
descriptions. A straight-forward choice would be to consider a bit as
a logical state and the precise capacitor state as the physical state.  

The following interesting question arises from this discussion -- 
\emph{is history independence only a matter of perspective}?. The short
answer is \emph{yes}, history independence is a matter of perspective.
There is no universal history independence.

To clarify, consider the universe as a whole from the viewpoint of classical
physics. Under the classical viewpoint, knowledge of current
state of all objects in the universe enables determination of
any past or future universal state since the laws of physics work
both forwards and backwards in time. Hence, the past is never hidden
and history independence is impossible. For example, using the
currently observed movement of galaxies, the past states of the
universe can be inferred up to the very initial moments of the big bang.

Physical phenomena at the subatomic scale is explained by quantum
physics.
At the quantum level, the universe appears nondeterministic.
Further, the
uncertainty principle \cite{rae2005quantum} restricts the ability to accurately
measure the current state of a quantum system. Since the current
state cannot be accurately known, it may seem the past states
cannot be determined either and history independence can be
achieved at the quantum level.

However, even at the quantum level history
independence is still a matter of perspective.
The perspective is governed by the interpretation of quantum
physics used. Under the many-worlds interpretation,
the multiverse as a whole is deterministic \cite{everetttheory}.
The probabilistic nature at the quantum level is only
our perception since our observations are limited to a single
universe.
A hypothetical all-powerful adversary that can view the entire
multiverse would have a full view of the past and the future similar
to the case of classical physics making history independence
in the presence of such an adversary impossible.

%
\section{Practical SHI for File Systems}
\label{hi:hifs}
%


We now apply our theoretical concepts
and results towards practical history independent system designs.

Our focus 
is designing \emph{systems} with
end-to-end history independent characteristics.
The difference between history independent implementations				
for simple ADTs, such as stacks and queues \cite{golovinthesis} versus a complete system,
such as a database, or a file system is a matter
of often exponentially increasing complexity. Fundamentally,
any system can be modeled as an ADT and a history independent implementation
can be sought for the system.

We introduce a general recipe for building history independent systems as follows:

\begin{enumerate}
\item	Model the system as an ADT. For a specific example of file
	system as an ADT, refer to section \ref{hifs:concepts}.
\item	Select a machine model for implementation. While defining the
	machine state identify all machine components that the adversary
	has access to and define the machine state associated with the
	adversary-accessible components.
\item	Depending on the application scenario, fix a desired notion
	of history independence and the corresponding $\Delta$ function.
\item	Based on the definition of $\Delta$, provide an implementation
	over the selected machine model. For complex systems, the implementation
	will likely require the most effort since the machine programs
	implementing the ADT operations must provably ensure that
	the advantage of the adversary is negligible in the $\Delta$HI game.
\end{enumerate}

Using this recipe, 
we design, implement, and evaluate a history independent file system (HIFS) and a delete agnostic file system (DAFS).

In Sections \ref{hifs:intro} - \ref{hifs:experiments}, we describe HIFS, an SHI implementation
for file systems.
In Section \ref{hifs:dhi}, we introduce DAFS (delete agnostic file system).
DAFS extends HIFS beyond SHI to implement new history independence notions.
DAFS aims to be more efficient for scenarios in which canonical representations can be avoided.
Further, DAFS extends functionality and resilience of the FS. 
%
\subsection{HIFS Overview}
\label{hifs:intro}
Existing file systems, such as Ext3 \cite{ext3} are not history
independent because they organize data on disk
as a function of both files' data and the sequence of file
operations. The exact same set of files can be organized
differently on disk depending on the
sequence of file system operations that created the set.
As a result, observations of data organization on disk
can potentially reveal file system's history.
Moreover, file system meta-data also contains historical information, such as
list of allocated blocks.
Therefore, when observations of data organization
are combined with file system meta-data, and with knowledge of application logic,
significantly more historical information can be derived,
for example, full recovery of deleted data.
It is therefore imperative to hide file system history.

File system history can be hidden by making
file system implementations history independent.
A straight-forward way to achieve this is to use existing
history independent data structures to organize files' data on disk.
Current techniques to make history independent data structures persistent
require the use of history independent hash tables \cite{golovinthesis}. 
The history independent hash tables \cite{blellochhashing} in turn use 
uniform hash functions. The use of uniform hash functions distributes
files' data on storage with no consideration to data locality. \\
Modern filesystems exploit data locality for performance by storing logically related
data at nearby physical locations on the storage device. For e.g., blocks of data belonging
to the same file may be stored physically close to each other to reduce seek time
on traditional storage devices with mechanical parts. This significantly reduces the latency for 
file access. Consequently,
existing history independent data structures which do not preserve data locality
cannot be used for practical filesystem design.

In HIFS, we overcome the challenge of providing history independence
while preserving data locality. 

\medskip
\noindent
\textbf{Model}
\label{hifs:model}

%
\noindent
We assume an insider adversary with full access to the
system disk.
By analyzing data organization on disk,
the adversary aims to derive file system's history. 
We assume that the adversary can make multiple observations of disk contents.
Recall from Section \ref{hi:canonical} that thwarting such
an adversary requires SHI with canonical representations.
Hence, HIFS targets canonical representations
for file storage.
%
%


\medskip
\noindent
\textbf{Concepts}
\label{hifs:concepts}

\noindent

\subsubsection{File System ADT}
\label{hifs:concepts:fsadt}

A file system organizes data as a set of files. 
We consider a file to consist of some meta-data and a bit string.
That is, a file
$f = \{ m_{f}, b_{f} \}$, where $m_{f}$ is the file meta-data
and $b_{f} \in \{0,1\}^{*}$.
We define a file system ADT using the file type.
Refer to Section \ref{hitheory:preliminaries:ds:graph} for a discussion on ADTs and types.

\smallskip
A file system is an ADT, that is, a pentuple
$(\mathcal{S}, s_{\phi}, \mathcal{O}, \Gamma, \Psi)$, where
\begin{itemize}[noitemsep,nolistsep]
\item	$\mathcal{S} = 2^{\mathcal{F}}$, is the set of states. Here
	$\mathcal{F}$ is the set of all files.
\item	$s_{\phi} \in \mathcal{S}$ is the initial state.
\item	$\Gamma = \mathbb{N} \cup \{0,1\}^{*} \cup (\mathbb{N} \times \mathbb{N} \times \mathbb{N}) \cup (\mathbb{N} \times \mathbb{N} \times \mathbb{N} \times \{0,1\}^{*})$ is the set of all possible inputs to the filesystem operations. In other words, the set of all possible inputs is composed of : All possible inputs to the filesystem close operation $\cup$ All possible inputs to the filesystem open operation $\cup$ All possible inputs to the filesystem read operation $\cup$ All possible inputs to the filesystem write operation. 
\item	$\Psi = \mathbb{Z} \cup (\{0,1\}^{*} \times \mathbb{Z})$ is the set of all possible outputs from the filesystem operations. In other words, the set of all possible outputs is composed of : All possible outputs of filesystem operations $\cup$ All possible output for filesystem metadata. 
\item	The set of operations $\mathcal{O} = \{$open, read, write, delete, close$\}$, such that
	\begin{itemize*}
		\item	open : $\mathcal{S} \times \{0,1\}^{*} \rightarrow \mathcal{S} \times \mathbb{Z}$.
		\item	read  : $\mathcal{S} \times \mathbb{N} \times \mathbb{N} \times \mathbb{N} \rightarrow \mathcal{S} \times \{0,1\}^{*} \times \mathbb{Z}$.
		\item	write  : $\mathcal{S} \times \mathbb{N} \times \mathbb{N} \times \mathbb{N} \times \{0,1\}^{*} \rightarrow \mathcal{S} \times \mathbb{Z}$.
		\item	delete : $\mathcal{S} \times \mathbb{N} \rightarrow \mathcal{S} \times \mathbb{Z}$.
		\item	close : $\mathcal{S} \times \mathbb{N} \rightarrow \mathcal{S} \times \mathbb{Z}$.
	\end{itemize*}
\end{itemize}

\smallskip
File systems including HIFS support several additional operations.
We have included only a small subset of the operations here for brevity.
\subsubsection{RAMDisk Machine Model}
\label{hifs:concepts:ramdiskmodel}

In Section \ref{hitheory:preliminaries:rammodel} we introduced the RAM
machine model. The RAM model consists of two components, a central processing
unit (CPU), and a random access memory (RAM). However, a file system is generally
used to store and manage data over a secondary storage device. 
Hence, we define the RAMDisk model which in addition to the CPU and memory
also includes the storage disk.

\begin{definition}
\label{def:ramdiskmodel}
{\em RAMDisk Machine Model} \\
A RAMDisk machine model $\mathcal{M_{D}}$ with $m$ $b$-bit memory words, $n$ $b$-bit CPU registers,
and $c$ $k$-bit disk blocks
is a pentuple $(\mathcal{S}, s_{\phi}, \mathcal{P}, \Gamma, \Psi)$, where
$\mathcal{S} = \{0, 1\}^{b(m+n) + c \cdot k}$ is a set of machine states,
$s_{\phi} \in \mathcal{S}$ is the initial state,
$\mathcal{P}$ is the set of all programs of $\mathcal{M_{D}}$,
$\Gamma = \{0, 1\}^{*}$ is a set of inputs,
$\Psi = \{0, 1\}^{*}$ is a set of outputs;
each program
$p \in \mathcal{P}$ is a function $p : \mathcal{S} \times \Gamma_{p} \rightarrow \mathcal{S} \times \Psi_{p}$,
where $\Gamma_{p} \subseteq \Gamma$ and $\Psi_{p} \subseteq \Psi$.
\end{definition}

$\mathcal{M_{D}}$ is initialized to state $s_{\phi}$. When a program $p \in \mathcal{P}$ with input $i \in \Gamma_{p}$ is executed by the CPU when $\mathcal{M_{D}}$ is in state $s_{1}$, $\mathcal{M_{D}}$ outputs $\tau \in \Psi_{p}$ and transitions to a state $s_{2}$. This transition is denoted as $p(s_{1},i) \rightarrow (s_{2},\tau)$.

According to our model (Section \ref{hifs:model}), the adversary has access to the
storage disk.
For the purpose of history independence, we need to consider the machine states associated
with the adversary-accessible components only. Hence, from this point onwards we refer to the
storage device state as the machine state.
Since the adversary does not access CPU and RAM components we permit the CPU and RAM
states to reveal history.
\subsubsection{File System Implementation (Data Structure)}
\label{hifs:concepts:fsimpl}

The objectives of HIFS design are three-fold.
\begin{enumerate*}
\item	For a given set of files, the organization of files' data and files' meta-data
        on disk must be the same independent of the sequence of file
	operations. That is, file system implementation must be canonically
	represented and thereby preserve SHI.
\item	Despite history independent storage, data locality must be preserved.
\item	The implementation must be easily customizable to suit a wide range of data locality scenarios.
\end{enumerate*}

HIFS is a history independent implementation of the file system ADT from Section \ref{hifs:concepts:fsadt}. 
That is, HIFS is a data structure
$\mathcal{D} = (\alpha, \beta, \gamma, s_{0}^{\mathcal{M}})$ 
obtained as follows.
\begin{itemize*}
\item	
	For all $n \in \mathbb{N}_{b}$, $\alpha(n) \in \{0,1\}^{b}$.
	Here, $\mathbb{N}_{b} = \{ x | x \in \mathbb{N}\;and\;x \leq 2^{b} \}$,
	$b$ is the machine word length, and
	$\alpha(n)$ is the bit string representing $n$.
	For all $t_{s} \in \{0,1\}^{c \cdot k}, \alpha(t_{s}) = t_{s}$ where $t_s$ represents the current disk state in the RAMDisk model.
	For all $(n_{1}, n_{2}, n_{3}) \in \mathbb{N}_{b} \times \mathbb{N}_{b} \times \mathbb{N}_{b},
	\alpha((n_{1}, n_{2}, n_{3})) = \alpha(n_{1}) || \alpha(n_{2}) || \alpha(n_{3})$.
	For all $(n_{1}, n_{2}, n_{3}, t_{s}) \in \mathbb{N}_{b} \times \mathbb{N}_{b} \times \mathbb{N}_{b} \times \{0,1\}^{c \cdot k}$,
	$\alpha((n_{1}, n_{2}, n_{3}, t_{s})) = \alpha(n_{1}) || \alpha(n_{2}) || \alpha(n_{3}) || t_{s}$.
\item	
	For all $z \in \mathbb{Z}_{b}$, $\alpha(z) \in \{0,1\}^{b}$.
	Here, $\mathbb{Z}_{b} = \{ x | x \in \mathbb{Z}\;and\;x \leq 2^{b} \}$,
	$b$ is the machine word length, and
	$\alpha(z)$ is the bit string representing $z$.
\item	$\gamma: \mathcal{O} \rightarrow \mathcal{P}^{\mathcal{M}}$.
	The programs that we provide for each file system operation are the
	key to achieving SHI. We discuss the HIFS programs in Section \ref{hifs:architecture}.
\item	The initial data structure state $s_{0}^{\mathcal{M}}$ corresponding to the initial
	file system ADT state is obtained by initializing all file system meta-data.
	\footnote{File system meta-data includes superblock, group descriptors,
	inode tables, and disk buckets map. Refer to \cite{hifs} for detailed HIFS
	architecture. The loading of
	file system programs and memory management are done by the operating system.}.
\end{itemize*}

\subsection{Architecture}
\label{hifs:architecture}

\subsubsection{Overview}
\label{hifs:architecture:overview}
\begin{figure}[t]
\begin{center}
\vspace{1.2cm}
\includegraphics[scale=0.26]{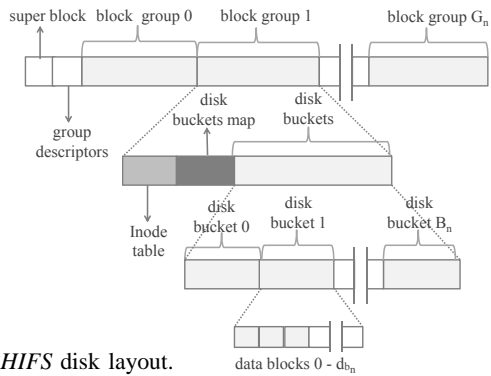}
\vspace{-2.6cm}
\caption[HIFS disk layout]{\begin{small}\emph{HIFS} disk layout. \end{small}
                                                                 
\label{fig:hifs:disklayout}}
\vspace{-0.4cm}
\end{center}
\end{figure}

A file system ADT state contains two pieces of information for each file --
the file meta-data and the file data.
HIFS supports SHI by providing unique memory representations for each ADT state.
To ensure unique representations, we first select an existing SHI data structure implementation for
a hash table ADT (Section \ref{hifs:architecture:hiht}).
Then, we re-design the hash table implementation to endow it with
data locality properties (Section \ref{hifs:architecture:keyinsights}).
Finally, we use two instances of re-designed hash table implementation to
store data on disk, one for files' meta-data and the other
for files' data (Section \ref{hifs:architecture:filestorage}).


We refer the reader to \cite{hifs} for detailed HIFS architecture.
In the following we focus only on the key features that make HIFS history independent
and locality preserving.

\subsubsection[History Independent Hash Table]{History Independent Hash Table \cite{blellochhashing}}
\label{hifs:architecture:hiht}
%
The SHI data structure of choice is the history independent hash table from
\cite{blellochhashing}.
The hash table in \cite{blellochhashing} is based on the stable
matching property of the \emph{Gale-Shapley Stable Marriage} algorithm \cite{galeshapley}.

\smallskip
\noindent
\emph{SHI Hash Table: }
\cite{blellochhashing} uses the stable matching property to construct a canonically represented SHI hash table . 

The construction in \cite{blellochhashing}
ensures that for a given set of keys, the hash table
data structure state is the same irrespective of the sequence of key insertions
and deletions, thereby making the hash table data structure
canonically represented\footnote{Refer to \cite{blellochhashing} for proof of
canonical representation.}.

\subsubsection{Key Insights}
\label{hifs:architecture:keyinsights}
%
%
The SHI hash table of \cite{blellochhashing} can be used as is to
organize file's meta-data and files' data on disk. This will
yield a SHI file system implementation.
However, doing so does not preserve data locality, which is an
important goal in HIFS design.
Then a key observation in this context is the following. In the Stable
Marriage algorithm each man in $M$ can rank the $n$ women in $W$ in $n!$
ways and vice-versa. Hence, several sets of preferences from keys to
buckets and buckets to keys are possible, each resulting in a distinct hash
table instance. Therefore, by changing the preference order of keys and
buckets we can control the organization of keys within the hash table.

To enable the re-ordering of preferences
we re-write the algorithms of \cite{blellochhashing}.
The re-write categorizes hash table operations in two Procedure Sets, a
\emph{generic} set and a \emph{customizable} set.
The generic procedures implement the overall search, insert, and delete
operations, and can be used unaltered for all scenarios. 
The customizable procedures determine the specific key and bucket
preferences thereby governing data organization, including canonical
representations and data locality.
%
This new procedure classification and rewrite enables HIFS to realize
different data locality scenarios
for the same data set through modifications of
the customizable procedure set only.
The generic procedures and the overall file system operations remain unchanged.
We note that this customization is achieved while preserving SHI.

Due to space constraints we refer the reader to \cite{hifs}
for complete listing of generic and customizable procedures for several data locality scenarios.
In this paper we focus on the scenario
of block group locality. Under block group locality,
it is desired that blocks of the same file are located
close together on disk ideally within the same block group.

\subsubsection{File Storage}
\label{hifs:architecture:filestorage}
%
%
File data is stored in blocks on disk. The blocks are grouped into
fixed-size units.
Each unit is termed as a disk bucket (Figure \ref{fig:hifs:disklayout}).
Like Ext3 \cite{ext3}, HIFS divides the disk into block groups.
Each block group contains an inode table, a disk buckets map, and a 
set of disk buckets.
Each entry within the disk buckets map has a one-one mapping to the corresponding
disk bucket within the same block group. 
The entry in the disk bucket map contains meta-data about the corresponding disk bucket, such as
whether the bucket is free or occupied.
\subsubsection{Achieving SHI With Data Locality}
\label{hifs:architecture:hi}
%
Existing file systems, such as Ext3 \cite{ext3} maintain a list of allocated
blocks within the file inode, which renders the disk space allocation history
dependent. HIFS on the other hand does not rely on allocation lists. Instead,
in HIFS, locations of data blocks are 
determined using only the current operation parameters and do not
depend on past operations.

In HIFS, the disk bucket maps from all block groups are collectively 
treated as a single SHI hash table.
Then, to achieve canonical representations for file system ADT states,
the file system operations are translated in to SHI hash table
operations as follows:
(a) Keys are derived from the full file path, and from read or write
offset parameters to file system operations.
(b) The hash table buckets are the disk buckets map entries.
(c) Key preferences are set such that each key first prefers all buckets from
one specific block group in a fixed order. Then, buckets from the next adjacent
block group and so on. This ensures that with high probability blocks of the
same file will be located within the same block group.
(d) Buckets prefer keys with higher numerical values. 

The above translation realizes one data locality scenario referred to
as block group locality. In \cite{hifs}
we demonstrate several other scenarios such as sequential file storage and
locality based on external parameters. 

\subsection{Experiments}
\label{hifs:experiments}

\begin{figure}[t]
  \hspace{0.6cm}
  \begin{minipage}[c]{0.38\textwidth}
  \end{minipage}
  \hspace{1cm}
  \begin{minipage}[c]{0.58\textwidth}
    \includegraphics[scale=0.55]{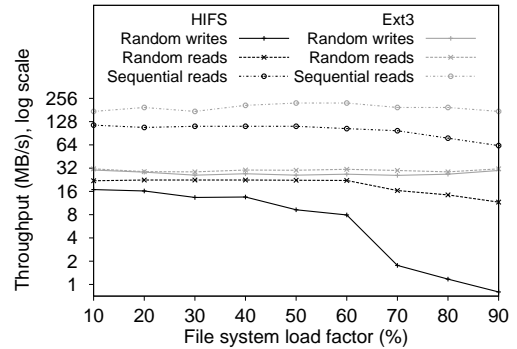}
  \end{minipage}
  
  \caption[]{
  \begin{footnotesize}
  HIFS experimental throughputs.
Load factor = space utilization.
\end{footnotesize}}

\label{fig:hifs:exps}
\vspace{-0.6cm}
\end{figure}
%
%
A detailed evaluation of HIFS for different application profiles and
data locality scenarios is available in \cite{hifs}.
Here, we only list partial results (Figure \ref{fig:hifs:exps}) to give a sense of 
throughputs that can be achieved under SHI.
%
%
%
%

We tested HIFS on a file system of size 100 GB with
mean file size 1 GB. The experiments were conducted for different load factors denoted by 
L\footnote{Load factor is the file system disk space utilization.}. The tests were conducted on L$\times$100 files. 
We used 4 KB disk blocks with 8 block groups and 5120 disk blocks per bucket. 
The typical inode size used was 281 bytes and IO size was 32 KB.

The performance of HIFS for read operations is comparable to read
throughputs of Ext3 for load factors up to 60\%. For higher load factors
the write operations sustain significant overheads. This is because
as the load factor increases, the per write overhead to maintain
canonical representations increases exponentially.
The overhead is the relocation of existing files' data when a new
file is being written or when a file is being resized.
However, once the write operations achieve canonical representations
with block group locality reads are efficient.

\section{Delete Agnostic File System (DAFS): Journaling and DAHI}
\label{hifs:dhi}

The HIFS implementation (Sections \ref{hifs:architecture} - \ref{hifs:experiments})
supports SHI.
As seen from the experimental results (Figure \ref{fig:hifs:exps}), for higher file
system load factors, write efficiency is low. 
This is because SHI strictly requires canonical representations.
To ensure canonical representations, HIFS relocates data on each write operation.
The amount of data re-located increases exponentially with the file system
load factor. Hence, the write throughputs are significantly lower for load factors
greater than 60\%.

Applications that do not require SHI can be made highly efficient
using new targeted history independence notions. 
In Section \ref{hi:deltahi}, using $\Delta$HI we have defined several
new history independence notions that unlike SHI, do not
require canonical representations.
We have re-designed the file system layer to support such new notions.
Further, we have extended both functionality and resilience of the file system.
The new file system is called \emph{Delete Agnostic File System (DAFS)}.

\medskip

\noindent
\textbf{Journaled History Independence (JHI).}
\label{hifs:dhi:journaling}
In the event of a system failure, it is imperative that the file system state
is not corrupted. To ensure this, file systems typically employ a journal.
File system operations are first recorded in the journal and then
applied to the file storage area. If a failure occurs while writing
to the journal, the operations can be ignored on system recovery.
On the other hand, if failure occurs while writing to file storage,
then on recovery the operations can be re-played from the journal. 
Thus each write request to the file system causes two disk writes,
one to the journal and one to file storage.

\medskip

\noindent
\textbf{DAFS Journaling.}
\label{hifs:dhi:journaling:how}
%
In DAFS, a separate region on disk is reserved for a journal in the
form of a circular log.
The journal contains information for a finite number of file system operations,
say $k$ operations.
Operations are recorded in the journal in the order in which they are received by the
file system. To restore consistency after
system failure, it is essential to maintain operation order.  
Hence, the sequence of $k$ operations recorded in the journal cannot be hidden.
The file storage areas, such as the inode tables,
disk bucket maps, and the disk buckets provide SHI just as in the case of HIFS.
Hence, once a file system operation
is applied to file storage and removed from the journal, its timing can no longer
be identified.
\medskip

\noindent
\textbf{Apparent paradox: why journaling increases efficiency}
\label{hifs:dhi:journaling:efficiency}
History independence relaxations that come with journaling allow
significantly more efficient file system operations due to batching.
This is explained in the following.

To maintain canonical representations in HIFS, data is potentially
re-located on each file system write operation. The frequency of data re-location 
increases exponentially with the file system load factor.
Hence, for higher load factors, the number of disk writes for
each write request to the file system is much greater than the two disk writes required for
journaling.
Further, the same data blocks may be re-located several times in consecutive write operations.   
If write operations can be batched, then the number of times a data block is
re-located can be reduced by avoiding redundant moves.

In DAFS, we choose to use the journal not only for failure recovery but also
as a buffer to batch write operations.
Write operations are applied to file storage areas only when the journal
is full. During this process, redundant disk writes are eliminated
significantly improving write throughputs.
\subsection{Delete-Agnostic HI (DAHI)}
\label{hifs:dhi:dahi}
%

Regulations \cite{cfr240} that are specifically concerned with irrecoverable
data erasure and not with other artifacts of history can be met
by systems that support OAHI for the delete operation.
We refer to this notion of history of independence as
delete-agnostic history independence (DAHI).
As discussed in Section \ref{hi:deltahi:oahi}, unlike SHI, OAHI for deletes
can be achieved without canonical representations.
Relaxing the requirement to noncanonical representations presents significant efficiency benefits.

To make DAFS preserve DAHI only, we first transform the SHI
hash table \cite{blellochhashing} into an DAHI hash table.
Then, we use the DAHI hash table to organize files' data and files' metadata.
\subsubsection{DAHI hash table}
The SHI hash table from \cite{blellochhashing} can be transformed
into an DAHI hash table as follows : \\
The hash table insert operation
is modified to not maintain canonical representations. Instead, the
insert operation uses linear probing \cite{Mehta:2004:HDS:1044879} and inserts a key in the
first available bucket. The SHI hash table delete operation\footnote{For complete listing of SHI hash table operations refer to \cite{hifs}.}
alone provides DAHI.
Deletion of a key from the hash table leaves an empty bucket, say bucket $b_{1}$.
The delete operation then finds a key that prefers
bucket $b_{1}$ more than the bucket it is located in according to the 
the gale-shapely stable marriage algorithm, say bucket $b_{2}$.
If such a key is found it is moved from $b_{2}$ to $b_{1}$
making $b_{2}$ empty. The process is then repeated for bucket $b_{2}$, and so on,
until no key is found for relocation.  
The net effect of this process is that a sequence of hash table operations
that contains a delete operation results in the same hash table state as
an insert-only sequence hiding all evidence of the delete. 

%


\begin{thm} DAHI hash table preserves delete-agnostic history independence.
\end{thm}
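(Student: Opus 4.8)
The plan is to exploit that the DAHI hash table operations are deterministic: the linear‑probing \texttt{insert} and the Gale--Shapley cascade \texttt{delete} each send a given disk state and input to a single disk state. Hence in the $\Delta$HI game the challenger's reply is a deterministic function of the coin $c$, so the adversary's advantage is $1/2$ when the two possible replies differ and $0$ otherwise. It therefore suffices to show: for every tuple $(s_1,s_2,\delta_0,\delta_1,I_0,I_1)$ with $\Delta=1$ for the delete‑agnostic (OAHI‑for‑\texttt{delete}) function, and every memory representation $s_1^{\mathcal M}\in m(s_1)$, the runs $\mathbb O^{\mathcal M}(\delta_0^{\mathcal M},s_1^{\mathcal M},I_0)$ and $\mathbb O^{\mathcal M}(\delta_1^{\mathcal M},s_1^{\mathcal M},I_1)$ end in the \emph{same} disk state. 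Because $\delta_0$ contains a \texttt{delete} and $\delta_1$ does not, the task is to prove that the net effect of a \texttt{delete} is invisible on disk.

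The heart of the argument is a \emph{delete‑absorption lemma}: for any reachable disk state $T$, any key $v$, and any operation sequence $\rho$ in which $v$ occurs exactly as one matched pair $\langle\,\texttt{insert}(v),\dots,\texttt{delete}(v)\,\rangle$ (and nowhere else), running $\rho$ from $T$ yields the same disk state as running $\rho$ with that pair excised. I would prove this by maintaining, along the two parallel runs, the invariant that the ``with‑$v$'' state $S_0$ and the ``without‑$v$'' state $S_1$ satisfy $\texttt{delete}(S_0,v)=S_1$, i.e. a single cascade‑delete of $v$ carries $S_0$ to $S_1$. Three cases preserve this: (i) a linear‑probing \texttt{insert} of a fresh key $w\neq v$ commutes appropriately with the pending cascade‑delete of $v$ --- the only discrepancy between $S_0$ and $S_1$ is that $v$ may sit in a bucket $w$ would otherwise occupy, shifting a contiguous run of keys one slot along, and the cascade fired when $v$ is finally removed walks exactly that run back; (ii) a cascade‑delete of some $u\neq v$ commutes with the cascade‑delete of $v$; (iii) re‑inserting an already‑present key, and \texttt{search}/\texttt{close}, are no‑ops on both states. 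When $\rho$ reaches its own \texttt{delete}$(v)$, the invariant gives that the resulting with‑$v$ state equals the current without‑$v$ state, and the identical suffixes of the two runs then finish identically.

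Finally I would conclude by reduction. In $\delta_0$, a \texttt{delete} of an absent key is already a no‑op on both the ADT and disk state and is dropped; for a nontrivial \texttt{delete}$(v)$ at position $j$, take $j$ minimal after the position $k<j$ of the live \texttt{insert}$(v)$ preceding it, so that $v$ appears on $[k,j]$ only as that pair, and apply the lemma to excise it without changing $s_2$ or the final disk state. Iterating removes every \texttt{delete}, producing a \texttt{delete}‑free $\delta_0^{\star}$ reaching $s_2$ with the same disk state as $\delta_0$; in the delete‑agnostic setting $\delta_0^{\star}$ and the adversary's \texttt{delete}‑free $\delta_1$ share the same skeleton (each obtained by inserting matched \texttt{insert}/\texttt{delete} pairs), so the same excision reduces both to one normal form and their disk states coincide, giving advantage zero. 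I expect the main obstacle to be case (i) of the invariant: making precise, for arbitrary linear‑probing clusters (including wrap‑around), chains of collisions, and arbitrary intervening operations, that the displacements introduced while $v$ is still present are \emph{exactly} the ones unwound by the Gale--Shapley cascade at deletion time --- this is where the stable‑matching structure of \cite{blellochhashing} does the real work and where edge cases are most likely to bite; a secondary subtlety is handling pairs of sequences that are not aligned on a common skeleton.
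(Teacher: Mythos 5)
Your core strategy is the same as the paper's: argue that the Gale--Shapley cascade fired by \texttt{delete}$(v)$ exactly undoes the displacement that $v$'s presence caused, so the disk state after an insert-then-delete of $v$ equals the disk state of the run with that pair excised, and then conclude that the adversary's two sequences yield identical memory representations (whence advantage zero, since the programs are deterministic). Where you differ is in organization: the paper fixes a single matched pair $I(x)$ at position $j$ and $D(x)$ at position $m$, partitions the keys inserted between them into sets $A$, $B$, $C$ according to whether they can reach $x$'s bucket $k$ under linear probing, and argues in one shot that after the cascade the first element of $C$ lands at $k$ and everything else is remapped ``as if $x$ was never inserted.'' You instead prove a delete-absorption lemma by induction along two parallel runs, maintaining the invariant $\texttt{delete}(S_0,v)=S_1$, and then iterate to strip out every delete. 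Your version is cleaner and strictly more general (several deletes, deletes of absent keys, arbitrary interleaving), and it isolates the genuinely hard combinatorial step --- your case (i) --- which the paper's $A/B/C$ argument asserts rather than proves; neither text actually discharges that step.

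One concrete warning about your closing reduction. You claim $\delta_0^{\star}$ and $\delta_1$ ``share the same skeleton,'' but the OAHI $\Delta$ function the paper instantiates for deletes only requires that both sequences reach the same ADT state $s_2$, that $\delta_0$ contain a delete, and that $\delta_1$ not. It does not force $\delta_1$ to be the excision of $\delta_0$: the adversary may submit a delete-free $\delta_1$ that inserts the same key set in a different order, and linear-probing insert-only states are order-dependent (inserting $a$ then $b$ into colliding buckets versus $b$ then $a$ yields different bucket assignments), so the two final disk states can differ and the adversary wins outright. The paper's proof has exactly the same hole --- it asserts that the $\Delta$ function ``enforces the adversary to select $\delta_1$ to be exactly such a sequence,'' which holds only for the stricter delete-specific $\Delta$ (the one requiring $\delta_1$ to equal $\delta_0$ with the matched pair removed), not for the OAHI function as stated. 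You at least flag this as a subtlety; to make the theorem true one must restrict $\Delta$ to excision-aligned pairs, at which point your delete-absorption lemma is precisely the statement needed.
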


\begin{proof}

Consider the $\Delta$ history independence game for operation agnostic history independence for deletes played between a challenger $\mathcal{C}$ and an
adversary $\mathcal{A}$. The ADT considered here is the delete-agnostic hash table. $\mathcal{A}$ selects two sequence of operations : $\delta_{0}$
and $\delta_{1}$ such that $\delta_0$ inserts and subsequently deletes an element $x$ from the hash table. $\delta_1$ 
does not contain any delete operations. To ensure indistinguishability between the two sequence, the delete agnostic hash table
must ensure that applying both the sequences of operations to the same initial ADT state should result in the same final ADT state.

Now consider that applying $\delta_{0}$ brings the hash table from an initial state 
to a given state $s$ with element $x$ placed at position $k$ in the hash table. 
Also consider that $\delta_{0}[j] = I(x)$ and $\delta_{0}[m] = D(x)$, 
that is, the $j^{th}$ operations in sequence $\delta_{0}$ inserts $x$ into the hash table and the $m^{th}$ operation deletes $x$ from the hash table.

Consider the elements inserted into the hash table by the operations in $\delta_0$ upto the $m^{th}$ operation.  
We can divide these elements into three sets as follows 

\begin{enumerate}
 \item $A = \{ y \mid \delta_{0}[i] = I(y),i<j\}$.
 \item $B = \{ y \mid \delta_{0}[i] = I(y),i>j,i<m\}$. Further, $\forall y \in B$, $y$ cannot be mapped
 to position $k$ in the hash table using linear probing.
 \item $C = \{ y \mid \delta_{0}[i] = I(y),i>j,i<m\}$. Further, $\forall y \in C$, $y$ can be mapped
 to position $k$ in the hash table using linear probing.
\end{enumerate}

The three sets are constructed in a way such that the elements of the sets are sorted on the order in which the elements 
are inserted into the hash table. To illustrate, consider a set $S \in \{A,B,C\}$ and two elements $a,b \in S$ such that 
$S_i = a$ and $S_j = b$ where $S_k$ is the $k^{th}$ element
of set $S$. Also consider $\delta_{0}[p] = I(a)$ and $\delta_{0}[q] = I(b)$. Then, the sorted property of the sets implies 
that $i < j$ only if $p < q$.

When the delete operation for $x$ is executed in $\delta_0$, the elements of $A$ and $B$ are not affected due to the design
of the hash table. Further by construction, 
once $x$ is deleted, the first element from $C$ is placed at position $k$ and
all other elements already placed in the hash table are remapped (if necessary). If $C = \phi$, then nothing is written to $k$ after the delete. 
Let $C = \{ c_1,c_2,\ldots,\}$ without loss of generality. Also let $D = \{d_1, d_2, \ldots, \}$ be the elements inserted into the hash table after $x$ was inserted.
Once the delete operation is executed in $\delta_0$, $c_1$ will be placed at position $k$ and the elements in $D$ will be remapped to positions in the hash table as if $x$ was never inserted.
Since, $\delta_0$ does not contain any other delete operations, the resulting state after applying the sequence of operations on the delete-agnostic hash table is equivalent to the
resulting state when an insert-only sequence is applied which does not insert and subsequently delete $x$ from the hash table. 
Note that the definition of the $\Delta$ function for operation agnostic history independence for deletes enforces the adversary to select $\delta_1$ to be exactly such a sequence in step 1 of the game. Hence, applying
$\delta_1$ to the delete agnostic hash table will lead to the same modifications to the hash table as
$\delta_0$. This ensures indistinguishability between the two sequences of operations for the adversary and gaurantees that the adversary cannot with the operation agnostic history independence game for deletes on the delete agnostic hash table with more than negligible advantage.

\end{proof}


\subsubsection{DAHI in DAFS}

DAFS uses the DAHI hash table for file storage. The DAHI hash table insert
operation is not required to maintain canonical representations.
Since the hash table insert operation is
used by file system write operation,
the overhead of maintaining canonical representations on file writes
is eliminated.

When a file is deleted in DAFS, for each disk bucket allocated to the file,
the same effect is achieved as that for a key deleted from the DAHI
hash table.
As a result, no evidence of a delete remains in the file system state
and DAHI is preserved.

Changing the history independence notion from SHI in HIFS to DAHI in DAFS
has significant potential for efficiency. 
The number of writes to
disk buckets needed for DAHI is significantly lower
as compared to the number of writes needed for SHI.
This is because write operations are no longer required to maintain
canonical representations. As a result, when disk buckets are allocated to a file,
other files' data needs no relocation. The relocation of data was precisely
the reason for lower throughputs of HIFS writes.
%



\subsection{Experiments}
\label{dafs:experiments}

\begin{figure}[t]
  \hspace{0.6cm}
  \begin{minipage}[c]{0.38\textwidth}
  \end{minipage}
  \hspace{1cm}
  \begin{minipage}[c]{0.58\textwidth}
    \includegraphics[scale=0.55]{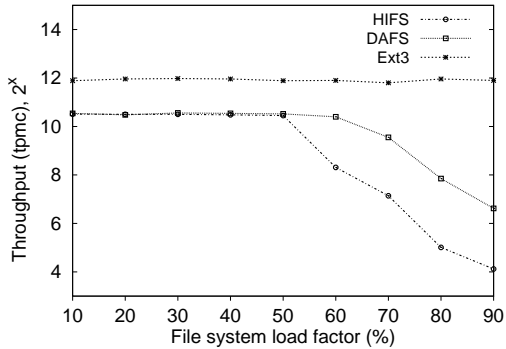}
  \end{minipage}
  
  \caption[]{
  \begin{footnotesize}
  TPCC throughputs for Ext3, HIFS, and DAFS with file system load factor.
  Load factor = space utilization.
  \end{footnotesize}
}
\label{fig:dafs:tpcc}
\vspace{-0.6cm}
\end{figure}

DAFS implements two new history independence notions, JHI and DAHI.
Both JHI and DAHI are aimed to increase file system efficiency.
DAFS can be configured to use DAHI and JHI either exclusively or together.
If DAFS is configured to use both JHI and DAHI, then DAFS uses DAFS
journaling \ref{hifs:dhi:journaling:how}. In this case, 
if the journal contains an entry for a delete operation 
then the adversary can learn about this
delete from the journal. Thus, DAFS allows the user to configure the filesystem
for better performance at the cost of revealing a few deletes to the 
adversary. 
In this section, we compare the performance of DAFS
and HIFS.\\
%
All experiments were conducted on servers with 2 Intel Xeon Quad-core CPUs at 3.16GHz,
8GB RAM, and kernel v3.13.0-24. The storage devices of choice
are Hitachi HDS72302 SCSI drives.\\
%
DAFS is implemented as a C++ based user-space Fuse 
file system.
All data structures, including DAHI hash table were
written from scratch. We tested DAFS on a file system of size 10 GB and
mean database size 1 GB. The experiments were conducted on L.10 databases. 
We used 4 KB disk blocks with 4 block groups and 512 disk blocks per bucket. 
The typical inode size used was 281 bytes.\\
%
To experiment for a real-world scenario we use the TPCC \cite{TPCC} database 
benchmark. The database of choice is Sqlite. Sqlite data files are stored
using HIFS, DAFS (without journaling), and Ext3. The BenchmarkSQL tool is used
to generate the TPCC workload.

Each test run commences with an empty file system and creates
new databases on file system storage. The number of databases is increased until the
file system is 90\% full. The TPCC scale factor is 10 giving a size of 1GB
for each database.
Throughputs are measured at specific load factors.
ranging from 10\% to 90\%.
\label{dafs:exp:results}

Figure \ref{fig:dafs:tpcc} reports the throughputs for HIFS, DAFS, and Ext3.
As per the TPCC specification, throughputs are reported as 
new order transactions executed per minute (tpmc). As seen, the performance
of DAFS is up to 4x times better than HIFS for load factors \textgreater 50\%.
Note that the performance of Ext3 is included as a reference. Ext3 does
not provide DAHI.

For load factors $\leq$ 50\%, HIFS and DAFS exhibit similar performance. 
At lower load factors fewer collisions occur as new files are added to
file system storage. Fewer collisions mean that the frequency of data
relocation to maintain canonical representations is low at load factors
$\leq$ 50\%. Hence, performance of DAFS and HIFS is similar at low
load factors.

\section{Related Work}
\label{hitheory:related}

\begin{table}[t]
\vspace{-0.5cm}
\begin{center}
\setlength{\tabcolsep}{1pt}.
\caption{Summary of history independent data structures. $\alpha$ $\leftarrow$ load factor,
$N$ $\leftarrow$ number of keys, $B$ $\leftarrow$ block transfer size. Also, 
I : insert, L : lookup, D : delete, R : range
\label{hifs:table:existingresearch}}{%
\begin{tabular}{|c|c|c|c|c|}
\hline
{\bf Data Structure} & {\bf SHI or WHI?} & {\bf Year} & {\bf Ops} & {\bf Runtime}\\
\hline
2-3 Tree \cite{23tree} & WHI & 1997 & I,L,D & $O(\log N)$\\
\hline
Hash Table \cite{naorantipersistence} & SHI & 2001 & I,L & $O(log (1/(1-\alpha)))$\\
\hline
Hash Table \cite{blellochhashing} & SHI & 2007 & I,L,D & $O(1/(1-\alpha)^{3})$\\
\hline
Hash Table \cite{naorcuckoo} & SHI & 2008 & I,L,D & I,D $\rightarrow$ $O(\log N)$, S $\rightarrow$ $O(1)$\\
\hline
B-Treaps \cite{golovinbtreaps} & SHI & 2009 & I,D,R & $O(log_B N)$\\
\hline
B-SkipList \cite{bskiplist} & SHI & 2010 & I,D,R & $O(log_B N)$\\
\hline
\end{tabular}}

\vspace{-0.6cm}
\end{center}
\end{table}
%

Existing history independent data structures
are summarized in Table \ref{hifs:table:existingresearch}.
%
%
The data structures in Table \ref{hifs:table:existingresearch} assume a
re-writable storage medium. \cite{molnarprom} designed a history
independent solution for a write-once medium
The construction is based on the observation from
\cite{naorantipersistence} that a lexicographic ordering of elements in a
list is history independent.  However, write-once memories do not allow
in-place sorting of elements.  Instead \cite{molnarprom} employs copy-over
lists \cite{naorantipersistence}.  When a new element is inserted, a new
list is stored while the previous list is erased.  This requires
$O(n^{2})$ space to store $n$ keys.

\cite{moranwriteonce} improves on \cite{molnarprom} requiring only linear
storage. The key idea is to store all elements in a global hash table
and for each entry of the hash table maintain a separate copy-over list
containing only the colliding elements.

%
%

\section{Conclusions}
\label{hifs:conclusion}

In this paper, we took a deep look into history independence
from both a theoretical and a systems perspective.
We explored the concepts of abstract data types, machine models, data structures
and memory representations. We identified the need for history
independence from the perspective of ADT and data structure
state transition graphs.
Then, we introduced $\Delta$ history independence,
which serves as a general framework to define a broad spectrum
of history independence notions including strong and weak
history independence. We also outlined
a general recipe for building history independent systems
and illustrated its use in designing two history independent
file systems.





\bibliographystyle{IEEEtran}                           
\bibliography{IEEEabrv,./bib/misc.bib,./bib/hids.bib,./bib/papers.bib,./bib/regulations.bib}        

\begin{thebibliography}{10}
\providecommand{\url}[1]{#1}
\csname url@samestyle\endcsname
\providecommand{\newblock}{\relax}
\providecommand{\bibinfo}[2]{#2}
\providecommand{\BIBentrySTDinterwordspacing}{\spaceskip=0pt\relax}
\providecommand{\BIBentryALTinterwordstretchfactor}{4}
\providecommand{\BIBentryALTinterwordspacing}{\spaceskip=\fontdimen2\font plus
\BIBentryALTinterwordstretchfactor\fontdimen3\font minus
  \fontdimen4\font\relax}
\providecommand{\BIBforeignlanguage}[2]{{%
\expandafter\ifx\csname l@#1\endcsname\relax
\typeout{** WARNING: IEEEtran.bst: No hyphenation pattern has been}%
\typeout{** loaded for the language `#1'. Using the pattern for}%
\typeout{** the default language instead.}%
\else
\language=\csname l@#1\endcsname
\fi
#2}}
\providecommand{\BIBdecl}{\relax}
\BIBdecl

\bibitem{golovinthesis}
D.~Golovin, ``Uniquely represented data structures with applications to
  privacy,'' Ph.D. dissertation, 2008, aAI3340637.

\bibitem{naorcuckoo}
M.~Naor, G.~Segev, and U.~Wieder, ``History-independent cuckoo hashing,'' in
  \emph{Proceedings of international colloquium on Automata, Languages and
  Programming, Part II}.\hskip 1em plus 0.5em minus 0.4em\relax
  Springer-Verlag, 2008, pp. 631--642.

\bibitem{blellochhashing}
G.~E. Blelloch and D.~Golovin, ``Strongly history-independent hashing with
  applications,'' in \emph{Proceedings of IEEE Symposium on Foundations of
  Computer Science}, ser. FOCS '07, 2007, pp. 272--282.

\bibitem{molnarprom}
D.~Molnar, T.~Kohno, N.~Sastry, and D.~Wagner, ``Tamper-evident,
  history-independent, subliminal-free data structures on prom storage-or-how
  to store ballots on a voting machine,'' in \emph{Proceedings of IEEE
  Symposium on Security and Privacy}, 2006, pp. 365--370.

\bibitem{moranwriteonce}
T.~Moran, M.~Naor, and G.~Segev, ``Deterministic history-independent strategies
  for storing information on write-once memories,'' in \emph{Proceedings of
  International Colloquium on Automata, Languages and Programming}.\hskip 1em
  plus 0.5em minus 0.4em\relax Springer, 2007, pp. 303--315.

\bibitem{hartlinecharacterizinghi}
J.~Hartline, E.~Hong, A.~Mohr, E.~E. Mohr, W.~Pentney, and E.~Rocke,
  ``Characterizing history independent data structures,'' 2002.

\bibitem{cfr240}
CFR240, ``{Code of Federal Regulations, Part 240.17a-4},'' 2010.

\bibitem{diesburgsurvey}
S.~M. Diesburg and A.-I.~A. Wang, ``A survey of confidential data storage and
  deletion methods,'' \emph{ACM Comput. Surv.}, vol.~43, no.~1, pp. 2:1--2:37,
  Dec. 2010.

\bibitem{Mehta:2004:HDS:1044879}
D.~P. Mehta and S.~Sahni, \emph{Handbook Of Data Structures And Applications
  (Chapman \& Hall/Crc Computer and Information Science Series.)}.\hskip 1em
  plus 0.5em minus 0.4em\relax Chapman \& Hall/CRC, 2004.

\bibitem{ficklebase}
S.~Bajaj and R.~Sion, ``Ficklebase: Looking into the future to erase the
  past,'' \emph{2013 IEEE 29th International Conference on Data Engineering
  (ICDE)}, vol.~0, pp. 86--97, 2013.

\bibitem{naorantipersistence}
M.~Naor and V.~Teague, ``Anti-persistence: History independent data
  structures,'' in \emph{In Proceedings of ACM symposium on Theory of
  computing}.\hskip 1em plus 0.5em minus 0.4em\relax ACM Press, 2001, pp.
  492--501.

\bibitem{buchbinderbounds}
\BIBentryALTinterwordspacing
N.~Buchbinder and E.~Petrank, ``Lower and upper bounds on obtaining history
  independence,'' \emph{Inf. Comput.}, vol. 204, no.~2, pp. 291--337, Feb.
  2006. [Online]. Available: \url{http://dx.doi.org/10.1016/j.ic.2005.11.001}
\BIBentrySTDinterwordspacing

\bibitem{modelsbook}
J.~Savage, ``Models of computation: Exploring the power of computing,'' 1998.

\bibitem{rae2005quantum}
A.~Rae, \emph{Quantum physics: a beginner's guide}, ser. Oneworld beginners'
  guides.\hskip 1em plus 0.5em minus 0.4em\relax Oneworld, 2005.

\bibitem{everetttheory}
H.~Everett, ``The theory of the universal wavefunction,'' 1956, phD Thesis.

\bibitem{ext3}
L.~Lu, A.~C. Arpaci-Dusseau, R.~H. Arpaci-Dusseau, and S.~Lu, ``A study of
  linux file system evolution,'' \emph{Trans. Storage}, vol.~10, no.~1, pp.
  3:1--3:32, Jan. 2014.

\bibitem{hifs}
S.~Bajaj and R.~Sion, ``{HIFS: History Independence for File Systems},'' in
  \emph{Proceedings of the 20th ACM Conference on Computer and Communications
  Security}, ser. CCS '13, 2013.

\bibitem{galeshapley}
D.~Gale and L.~Shapley, ``{College admissions and the stability of marriage},''
  \emph{American Mathematical Monthly}, vol.~69, no.~1, pp. 9--15, 1962.

\bibitem{TPCC}
T.~P.~P. Council, ``{TPC-C},'' Online at \url{http://www.tpc.org/tpcc/}, 1992,
  database Benchmark Specification.

\bibitem{23tree}
D.~Micciancio, ``An oblivious data structure and its applications to
  cryptography,'' in \emph{In Proceedings of ACM Symposium on the Theory of
  Computing}.\hskip 1em plus 0.5em minus 0.4em\relax ACM Press, 1997, pp.
  456--464.

\bibitem{golovinbtreaps}
D.~Golovin, ``B-treaps: A uniquely represented alternative to b-trees,'' in
  \emph{Proceedings of International Colloquium on Automata, Languages and
  Programming: Part I}.\hskip 1em plus 0.5em minus 0.4em\relax Springer-Verlag,
  2009, pp. 487--499.

\bibitem{bskiplist}
------, ``The {B}-skip-list: A simpler uniquely represented alternative to
  {B}-trees,'' \emph{CoRR}, vol. abs/1005.0662, 2010.

\end{thebibliography}
           




\end{document}